\documentclass[prl,superscriptaddress,showpacs,twocolumn,longbibliography]{revtex4-1}

\usepackage{braket}
\usepackage{color}
\usepackage[colorlinks=true,linkcolor=blue,urlcolor=blue,citecolor=blue,pdfusetitle]{hyperref}
\usepackage{hyperref}
\usepackage[usenames,dvipsnames]{xcolor}
\usepackage{amsmath,amsthm,amssymb}
\usepackage{graphicx}
\newtheorem{theorem}{Theorem}
\newtheorem{lemma}{Lemma}

\def\(({\left(}
\def\)){\right)}
\def\[[{\left[}
\def\]]{\right]}

\newcommand{\be}{\begin{equation}}
\newcommand{\ee}{\end{equation}}
\newcommand{\ben}{\begin{eqnarray}}
\newcommand{\een}{\end{eqnarray}}
\newcommand{\beq}{\begin{equation}}
\newcommand{\eeq}{\end{equation}}


\begin{document}

\title{Exactness of mean-field equations for open Dicke models with an application to pattern retrieval dynamics}

\author{Federico Carollo}
\affiliation{Institut f\"ur Theoretische Physik, Universit\"at T\"ubingen, Auf der Morgenstelle 14, 72076 T\"ubingen, Germany}
\author{Igor Lesanovsky}
\affiliation{Institut f\"ur Theoretische Physik, Universit\"at T\"ubingen, Auf der Morgenstelle 14, 72076 T\"ubingen, Germany}
\affiliation{School of Physics and Astronomy and \\Centre for the Mathematics and Theoretical Physics of Quantum Non-Equilibrium Systems, University of Nottingham, Nottingham, NG7 2RD, UK}

\date{\today}

\begin{abstract}
Open quantum Dicke models are paradigmatic systems for the investigation of light-matter interaction in out-of-equilibrium quantum settings. Albeit being structurally simple, these models can show intriguing physics. However, obtaining exact results on their dynamical behavior is challenging, since it requires the solution of a many-body quantum system, with several interacting continuous and discrete degrees of freedom. Here, we make a step forward in this direction by proving the validity of the mean-field semi-classical equations for open multimode Dicke models, which, to the best of our knowledge, so far has not been rigorously established. We exploit this result to show that open quantum multimode Dicke models can behave as associative memories, displaying a nonequilibrium phase transition towards a pattern-recognition phase. 
\end{abstract}

\maketitle 

Since its inception \cite{PhysRev.93.99}, the Dicke model has become a paradigm for the study of light-matter interaction and its  equilibrium as well as isolated-system dynamical properties have been widely investigated both theoretically and experimentally  \cite{PhysRevA.7.831,PhysRevA.8.1440,CARMICHAEL197347,PhysRevA.8.2517,PhysRevA.9.418,davies1973,PhysRevA.75.013804,Zhiqiang:17,PhysRevLett.89.253003,PhysRevLett.91.203001,PhysRevLett.104.130401,Baumann:2010aa,PhysRevLett.107.140402}. Nowadays, the interest is in understanding how the presence of an environment, leading to dissipative effects, impacts on the behavior of Dicke models. In this out-of-equilibrium setting, much less is known. Several arguments indicate the persistence of the Dicke superradiant phase transition \cite{doi:10.1002/qute.201800043,10.1371/journal.pone.0235197,PhysRevLett.125.093604,bezvershenko2020}, and this hypothesis is further supported by numerical \cite{PhysRevLett.118.123602} and experimental \cite{Klinder3290} evidence. 

Particularly intriguing is the possibility that these nonequilibrium spin-boson systems can feature dynamics akin to associative memories \cite{john1982neural,Fuchs:1988aa}, i.e.~they can display pattern-recognition behavior \cite{PhysRevLett.107.277201,doi:10.1080/14786435.2011.637980,PhysRevLett.114.143601,PhysRevA.95.032310,Rotondo_2018,PhysRevLett.125.070604}, and implementations of this physics are being explored in realistic experimental setups \cite{marsh2020enhancing}. Couplings between spins and bosons encode different patterns which, in the simplest case, are strings of $\pm1$, see Fig.~\ref{Fig1}(a). The overlap $\xi_\mu$ of the spin configuration with pattern $\mu$, which plays the role of an order parameter, is defined by means of a generalized {\it magnetization} [c.f.~Fig.~\ref{Fig1}(a)]. Assuming the initial configuration to be close to one pattern, two different regimes may emerge. In the first, the state converges --due to dissipation-- to a stationary one where all information about the initial time is lost. As sketched in Fig.~\ref{Fig1}(b), this coincides with a regime where the overlaps $\xi_\mu$ are all zero. In the other, instead, it converges to a stationary state displaying a finite overlap with the initially stored pattern. In this case, the system ``recognizes" the initial condition as a pattern and stores this information in its nonequilibrium steady state. In Dicke models, the observed stationary regime is expected to depend on the spin-boson coupling strength, see Fig.~\ref{Fig1}(b). 

\begin{figure}[t]
\centering
\includegraphics[scale=0.63]{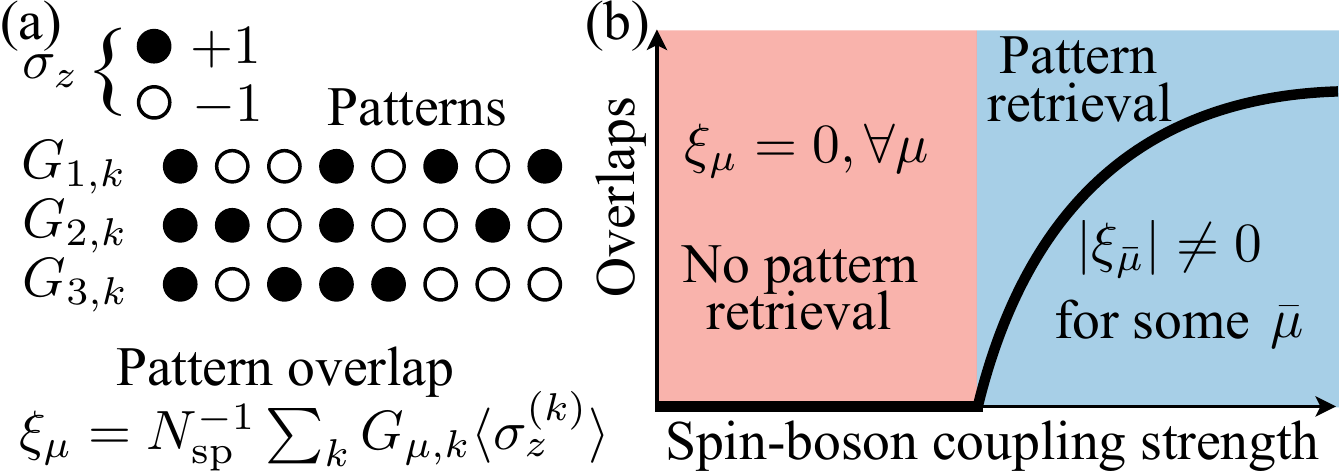}
\caption{{\bf Pattern recognition in Dicke models.} a) Patterns --strings of $\pm 1$-- are encoded in the couplings $G_{\mu,k}$ between $N_{\rm sp}$ spins and $M$ bosonic modes. Each pattern is associated with a mode. The overlap of the quantum state with the patterns is defined as a generalized {\it magnetization} aligned with the coefficients $G_{\mu,k}$. b) As a function of the spin-boson coupling strength, the quantum system passes from a disordered phase, in which it cannot store any pattern, to an ``ordered" one, in which it can recognize and protect a pattern.  }
\label{Fig1}
\end{figure} 

Understanding whether this pattern-recognition behavior corresponds to a genuine nonequilibrium phase requires the study of quantum systems with large number of bosons and spins. Simulations in fully quantum regimes beyond perturbative approaches \cite{PhysRevResearch.2.013198,marsh2020enhancing,PhysRevLett.125.070604} are thus infeasible. Analytically, one may study these sytems relying on so-called mean-field equations, obtained by assuming that expectation values of products of operators factorize \cite{PhysRevLett.118.123602,doi:10.1002/qute.201800043,PhysRevResearch.2.033131}. However, a proof of the validity of this assumption in nonequilibrium open Dicke models is still missing, and a widespread belief is that a ``full quantum treatment" may lead to different results.

In this paper, we provide a proof of the exactness of the mean-field assumption for open multimode Dicke models. This result is relevant as it solves an open question on the validity of the semi-classical treatment for these systems. Further, it allows us to establish the existence of a nonequilibrium pattern-recognition phase transition in Dicke models.  Our proof --which takes inspiration from Ref.~\cite{Pickl}-- is of broad applicability: it can be adapted to account for the presence of individual spin dissipative processes \cite{PhysRevLett.118.123602}, to account for time-dependent coefficients in the generator \cite{Niedenzu_2018,PhysRevLett.124.170602,carollo2020nonequilibrium}, or even to other models with all-to-all couplings  \cite{Bagarello:1992aa,Benatti_2018,PhysRevLett.121.035301,Norcia259,PhysRevLett.123.260401,PhysRevB.101.214302}, also with multi-body interactions \cite{wang2020dissipative,Grimsmo_2013,Garbe:2020aa}, \\

{\it \bf Open multimode Dicke models.---} Our Dicke model consists of an ensemble of $N_{\rm sp}$ spins coupled to $M$ different bosonic modes, described by annihilation and creation operators $a_\mu,a_\mu^\dagger$ obeying canonical commutation relations \cite{petz1990invitation}. Spins are two-level systems, with excited state $\ket{\bullet}$ and ground state $\ket{\circ}$. Transitions between  states in the $k$-th spin are implemented by the Pauli operator $\sigma_{ x}^{(k)}$, where $\sigma_{ x}\ket{\bullet/\circ}=\ket{\circ/\bullet}$. The operator $\sigma_{ z}^{(k)}$, with $\sigma_{ z}\ket{\bullet}=\ket{\bullet}$ and $\sigma_{ z}\ket{\circ}=-\ket{\circ}$, indicates the presence of an excitation. We also define $\sigma_{ y}^{(k)}=-i\sigma_{ z}^{(k)}\sigma_{ x}^{(k)}$.

The (Markovian) nonequilibrium dynamics of the spin-boson model is implemented by the Lindblad generator $\dot{X}=\mathcal{L}[X]$ \cite{lindblad1976,Gorini1976,breuer02a}, providing the time-evolution of a generic operator $X$. Defining $n_\mu=a^\dagger_\mu a_\mu$, we consider
\begin{equation} 
\mathcal{L}[X]:=i[H,X]+\sum_{\mu=1}^M \kappa_\mu\left(a^\dagger_\mu  X a_\mu -\frac{1}{2}\left\{n_\mu, X\right\}\right)\, ;
\label{MT-Lind}
\end{equation}
the second term appearing on the right-hand side describes boson losses, at rate $\kappa_\mu$ for the different modes, while $H$ is the system Hamiltonian. This operator consists of a free contribution for both spins and bosons
$$
H_{\rm F}=\Omega \sum_{k=1}^{N_{\rm sp}}\sigma_{ x}^{(k)}+\sum_{\mu=1}^M\Omega_\mu \, n_\mu \, , 
$$
and of an interaction term  
\begin{equation}
H_{\rm int}=\frac{g_0}{\sqrt{N_{\rm sp}}}\sum_{\mu=1}^M \sum_{k=1}^{N_{\rm sp}} G_{\mu,k}\left(a_\mu +a_\mu^\dagger \right)\sigma_{ z}^{(k)}\, .
\label{orig-int}
\end{equation}
The coefficients $G_{\mu,k}$ specify the spin-boson interaction. We consider these to be independent identically distributed random variables assuming the values $+1$ or $-1$ with equal probability, as sketched in Fig.~\eqref{Fig1}(a). The scaling $1/\sqrt{N_{\rm sp}}$ --typical for these models-- is important to establish a well-defined thermodynamic limit \cite{doi:10.1002/qute.201800043}  (see also \cite{doi:10.1098/rspa.2017.0856} for an application to open systems). For each $\mu$, the string $G_{\mu,k}$ forms a pattern which is encoded in the Hamiltonian. A key result of this paper consists in showing that the system can recognize and protect an initially stored pattern, for strong enough spin-boson coupling $|g_0|$, see Fig.~\ref{Fig1}. 

\begin{figure}[t]
\centering
\includegraphics[scale=0.54]{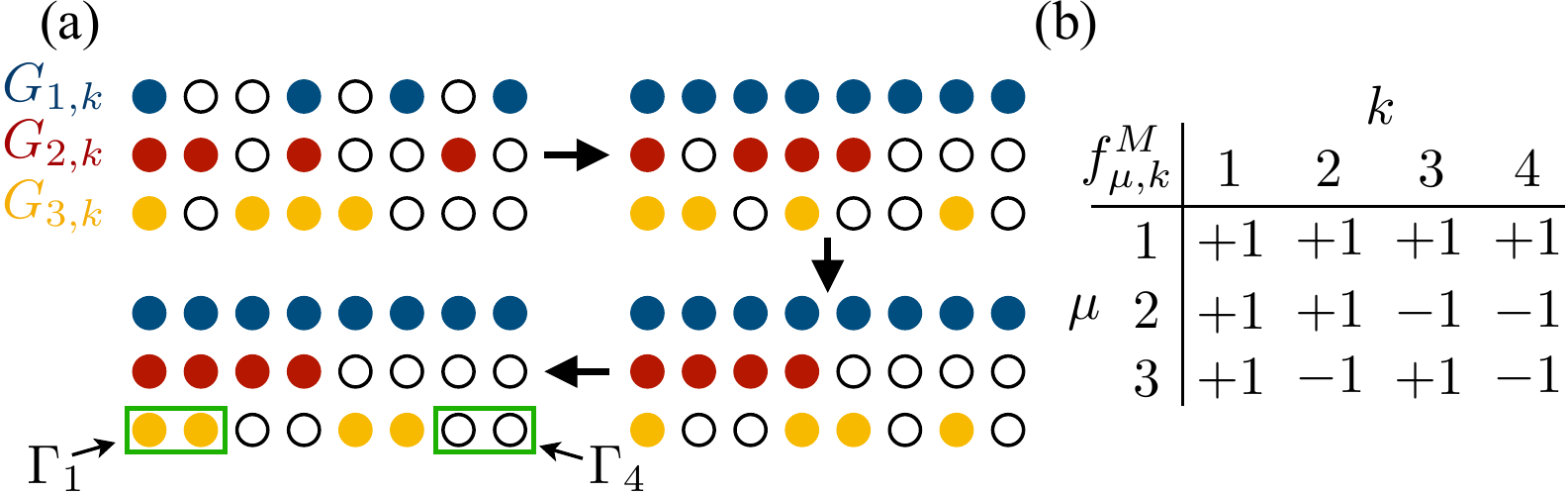}
\caption{{\bf Mapping to large spins.} a) Example of the mapping for $M=3$ patterns and $N_{\rm sp}=8$ spins. The original coupling between the $\mu$-th mode and the $k$-th spin is encoded in $G_{\mu,k}$. To perform the mapping we first apply a gauge transformation making $G_{1,k}=1, \, \forall k$. Then, we reorder $G_{2,k}$ to put all $+1$ first. Finally, also the last pattern is reordered by moving the $+1$ towards the right and the $-1$ towards the left in each sub-block identified by the new $G_{2,k}$. In this way, $2^{M-1}$ subsets of spins $\Gamma_{ k}$, equally coupled with each mode, are identified ($\Gamma_1$ and $\Gamma_4$ are highlighted in the figure for clarity). b) These subsets of spins are described by ``large-spin" operators and couple to bosons as specified by the matrix $f_{\mu, k}^M$. }
\label{Fig2}
\end{figure} 

Before showing this, we make some considerations which bring the model into a convenient form, see Fig.~\ref{Fig2}. First, without loss of generality, the first pattern, $G_{1,k}$, which is made of $\pm 1$, can be brought into a pattern with all $+1$, by means of the gauge transformation $\sigma_{ z}\to -\sigma_{ z}$ applied to those spins $h$ for which, originally, $G_{1,h}=-1$. Then, we reorder the remaining $M-1$ rows of $G_{\mu,k}$. We look at $G_{2,k}$: this has $\pm 1$ at random positions. We now relabel the spins. We take those with $G_{2,h}=+1$ to the left and those with $G_{2,h}=-1$ to the right. This reshaping is not affecting the first pattern. In addition, there is a $\tilde{k}$ such that for $k\le \tilde{k}$, $G_{2,k}=1$ while $G_{2,k}=-1$ otherwise. We then move to $G_{3,k}$ and we relabel spins as follows. In the subset of spins for which $G_{2,k}=1$, we have values of $G_{3,k}$ which can be both positive and negative. We thus reorder this subsequence in such a way that all $+1$ are moved on the left and $-1$ on the right. The same can be done for the subset of the sequence $G_{3,k}$ corresponding to values $G_{2,k}=-1$. This procedure, sketched in Fig.~\ref{Fig2}, is then iterated up to the last pattern. 

This mapping generates $2^{M-1}$ subsets of spins, described by ``large-spin" operators and interacting with the bosonic modes. For $N_{\rm sp}\gg1$, these subsets are expected to have the same number of spins. This is due to the fact that, given the statistical properties of the $G_{\mu,{ k}}$, in a large enough set of randomly chosen spins there is, at leading order in extensivity of the set, an equal number of $+1$ and of $-1$, in their $G_{\mu,k}$. We can thus consider subsets to contain $N=N_{\rm sp}/2^{M-1}$ spins. In this representation, the interaction Hamiltonian reads 
\begin{equation}
H_{\rm int}^N=\frac{g}{\sqrt{N}}\sum_{\mu=1}^M \sum_{{ k}=1}^{2^{M-1}} f_{\mu,{ k}}^M\left(a_\mu +a_\mu^\dagger \right)S_{ z,k}\, ,
\label{new-int}
\end{equation}
where $S_{ a,k}=\sum_{h\in \Gamma_{ k}}\sigma_{ a}^{(h)}$ is the sum of the $\sigma_{ a}$-spin operators which belongs to the ${ k}$-th subset, denoted as $\Gamma_{ k}$ [see Fig.~\ref{Fig2}(a)]. In addition, we have defined $g=g_0/\sqrt{2^{M-1}}$. The coefficients $f_{\mu,{ k}}^M=\pm1$ specify the interaction between spins in $\Gamma_{ k}$ and the $\mu$-th boson. This representation provides a more compact formulation of the model.  This mapping can be extended to consider models whose spin-only part of the dynamical generator is not invariant under the gauge transformation or also to consider generic distributions for  $G_{\mu, k}$ \cite{SM}. \\

{\it \bf Mean-field dynamics.---} As a consequence of the previous mapping, it is sufficient for understanding the behaviour of our nonequilibrium Dicke model to focus on the dynamics of the ``large-spin" operators. In this representation, the generator is $\mathcal{L}_N$, the same as the one in Eq.~\eqref{MT-Lind} with Hamiltonian rewritten as $H_N=H_{\rm F}+H_{\rm int}^N$. The expectation of time-evolved operators, $X_t=e^{t\, \mathcal{L}_N}[X]$, is given by 
$
\langle X\rangle_t=\omega_t\left(X\right):=\omega\left(e^{t\, \mathcal{L}_N}[X]\right)
$,
where the functional $\omega$ represents the initial state, while $\omega_t$ the time-evolved one. As a consequence, we have
\begin{equation}
\dot{\omega}_t\left(X\right)=\omega_t\left(\mathcal{L}_N\left[X\right]\right)\, .
\label{t-der-state}
\end{equation}
We are interested in the ``macroscopic" operators \cite{Lanford:1969aa,strocchi2005symmetry,verbeure2010many,bratteli2012operator}
\begin{equation}
m_{ a,k}^N:=\frac{1}{N}S_{ a,k}\, , \,\, \mbox{for } { a}={ x}, { y}, { z}\, , \quad \alpha_{\mu,N}:=\frac{a_\mu}{\sqrt{N}}\, ;
\label{seq-spin}
\end{equation}
the first ones are the usual average ``magnetization" operators of the spin ensembles, while the rescaled bosonic operators appear typically in superradiant transitions. Indeed, a non-vanishing expectation of these operators implies a macroscopic ($\propto N$) bosonic occupation. 

We want to derive the dynamics of these quantum operators in the thermodynamic limit $N,N_{\rm sp}\to\infty$. We thus compute the action of the generator $\mathcal{L}_N$ on the operators in Eq.~\eqref{seq-spin} and get \cite{SM}
\begin{equation}
\begin{split}
\mathcal{L}_N\!\!\left[m_{ a,k}^N\right]&\!\!=\!\!\!\sum_{ b}\!\!\left(\!\!-2\Omega\epsilon_{ xab}
\!\!-\!2g\!\!\sum_{\mu}\!\epsilon_{ zab}f_{\mu,{ k}}^M\!\left(\!\alpha_{\mu,N}^\dagger \!+\!\alpha_{\mu,N}\!\right) \!\!\right)\!\!m_{ b,k}^N\\
\mathcal{L}_N\!\!\left[\alpha_{\mu,N}\right]&\!\!=\!\!-\left(i\Omega_\mu+\frac{\kappa_\mu}{2}\right)\alpha_{\mu,N}-ig\sum_{{ k}=1}^{2^{M-1}}f_{\mu,{ k}}^M m_{ z,k}^N\, ,
\end{split}
\label{heis-eq}
\end{equation}
where $\epsilon_{ abc}$ is the fully anti-symmetric tensor. 
To make progress, one typically assumes that the dynamics does not generate correlations among the different constituents in the thermodynamic limit, so that expectation values factorize. This leads to the mean-field equations
\begin{equation}
\begin{split}
\dot{m}_{ a,k}&=-2\Omega\!\sum_{b}\!\epsilon_{ xab}m_{ b,k}\!-2 g\!\sum_{ b,\mu}\!\epsilon_{ zab}f_{\mu,{ k}}^M\left(\alpha_{\mu}^\dagger +\alpha_{\mu}\right) m_{ b,k}\, ,\\
\dot{\alpha}_\mu&=-\left(i\Omega_\mu+\frac{\kappa_\mu}{2}\right)\alpha_\mu-ig\sum_{{ k}=1}^{2^{M-1}} f_{\mu,{ k}}^M m_{ z,k}\, .
\end{split}
\label{lim-eqs}
\end{equation}
In order to show that they are exact in the thermodynamic limit, we need to prove that 
\begin{equation}
\lim_{N\to\infty}\omega_t\left(m_{ a,k}^N\right)-m_{ a,k}(t)=0=\lim_{N\to\infty}\omega_t\left(\alpha_{\mu,N}\right)-\alpha_\mu(t)\, ,
\label{limits}
\end{equation}
meaning that the expectation of the operators of Eqs.~\eqref{seq-spin} behaves, for large $N$, as the time-dependent scalar functions $m_{ a,k}(t),\alpha_\mu(t)$ obeying Eqs.~\eqref{lim-eqs}. To obtain this result, a proper strategy must be identified. In particular, an appropriate ``cost function" controlling the above limits is needed. Defining $E_{ a,k}=m_{ a,k}^N-m_{ a,k}(t)$ and $A_{\mu}=\alpha_{\mu,N}-\alpha_\mu(t)$, we consider
\begin{equation}
\begin{split}
\mathcal{E}_N(t):=\!\!\!\!\!\!\!\!\sum_{ k=1,a=x,y,z}^{2^{M-1}}\!\!\!\!\!\!\!\omega_t \left(E_{ a,k}^2\right)+\sum_{\mu=1}^M\omega_t \left(A_\mu^\dagger A_\mu+A_\mu A_\mu^\dagger \right)\, .
\end{split}
\label{error}
\end{equation}
This quantity is a sum of positive contributions consisting of the expectation of the square of the distance of the operators from their mean-field counterpart. Namely, $\mathcal{E}_N(t)$ measures the fraction of spins or bosons not behaving as dictated by Eqs.~\eqref{lim-eqs}. In addition, via Cauchy-Schwarz inequality, one can show that
\begin{equation}
\left|\omega_t \left(E_{ a,k}\right)\right|\le \sqrt{\omega_t \left(E_{ a,k}^2\right)}\le \sqrt{\mathcal{E}_N(t)}\, ,
\label{err-bound}
\end{equation}
and thus $\lim_{N\to\infty}\mathcal{E}_N(t)$ controls the limits in Eq.~\eqref{limits}, as desired. For physical initial states \cite{strocchi2005symmetry,verbeure2010many,bratteli2012operator}, with short-range correlations,  one has $\lim_{N\to\infty}\mathcal{E}_N(0)=0$. As we now show, for these states, $\mathcal{E}_N(t)$ vanishes for large $N$, implying the exactness of the mean-field assumption for these nonequilibrium multimode Dicke models. \\

\noindent {\bf Theorem.} With the above definitions, if the initial state of the system is such that $\lim_{N\to\infty}\mathcal{E}_N(0)=0$ then, for all finite $t$, we have that $\lim_{N\to\infty}\mathcal{E}_N(t)=0$.

{\it Proof:} The full proof is reported in Ref.~\cite{SM}. Here we provide the main steps. The idea is to use Gronwall's Lemma \cite{zbMATH02606740,bellman1943}, which states that if a positive, bounded, and $N$-independent constant $C$, such that $\dot{\mathcal{E}}_N(t)\le C\,  \mathcal{E}_N(t)$, exists then 
\begin{equation}
\mathcal{E}_N(t)\le e^{C\, t}\mathcal{E}_N(0)\, .
\label{G-lemma}
\end{equation}
With the assumption $\lim_{N\to\infty}\mathcal{E}_N(0)=0$, letting $N\to\infty$ in the above relation would prove the theorem. What is missing is to show that such constant $C$ indeed exists. This can be achieved by directly inspecting the time derivative of all terms forming $\mathcal{E}_N(t)$. They are given by sums of contributions having, for instance, the form $\omega_t\left(E_{ b,k}B A_\mu\right)$, where $B$ can either be an operator or a scalar from Eqs.~\eqref{lim-eqs}. In addition, it can be shown that 
$$
\left|\omega_t\left(E_{ b,k}B A_\mu\right)\right|\le \left\|B\right\|\mathcal{E}_N(t)\, ,
$$
and this gives a way to estimate a suitable constant $C$. We thus obtain 
$$
\frac{d}{dt}\mathcal{E}_N(t)\le \left|\frac{d}{dt}\mathcal{E}_N(t)\right|\le C\mathcal{E}_N(t)\, ,
$$
and we can exploit Gronwall's Lemma to finish the proof of the theorem as already discussed. \qed\\

{\bf Pattern-recognition phase transition.---} With the above result, we establish that the semi-classical mean-field equations \eqref{lim-eqs} correctly capture the behavior of our system, in the thermodynamic limit. As such, we can now use these equations to unveil the presence of a nonequilibrium pattern-recognition phase transition.

In the original formulation of the problem, see Eq.~\eqref{orig-int} and Fig.~\ref{Fig1}(a), we can define the overlap of the quantum state of the spins with the pattern $\mu$ as
$$
\xi_\mu:=\lim_{N_{\rm sp}\to\infty}\frac{1}{N_{\rm sp}}\sum_{k=1}^{N_{\rm sp}}G_{\mu,k}\braket{\sigma_{ z}^{(k)}}_t\, .
$$
This equation shows that, if the expectaction value of the operator $\sigma_{ z}$ is, for each spin, aligned with the corresponding value of $G_{\mu,k}$, then the overlap $|\xi_\mu|$ is different from zero (pattern retrieval). Otherwise, $\xi_\mu$ tends to vanish for $N_{\rm sp}\to\infty$ (pattern not retrieved). In the large-spin representation, the overlaps can be expressed in terms of the coefficients $f_{\mu,{ k}}^M$ and of the macroscopic operators $m_{ z,k}^N$, [c.f. Eq~\eqref{new-int} and Fig.~\ref{Fig2}]. In particular,
$$
\xi_\mu=\frac{1}{2^{M-1}}\sum_{ k=1}^{2^{M-1}}f_{\mu,{ k}}^M\lim_{N\to\infty}\omega_t\left(m_{ z,k}^N\right)\, .
$$
Invoking our theorem, we can thus study the dynamics and the stationary properties of these overlaps through the scalars $m_{ z,k}$, obeying the mean-field equations \eqref{lim-eqs}.

To prove the existence of the phase transition, we first show the presence of different stationary solutions to Eq.~\eqref{lim-eqs}, featuring a finite overlap with one of the patterns. Without loss of generality, we consider all rates of the dynamical generator to be positive and, further, that the constant of motion $m_{ T,k}^2=\sum_{ a}m_{ a,k}^2=1$, $\forall { k}$. Then, we take the ansatz solution $m_{ z,k}=f_{\nu,{ k}}^M|z|$, aligned with pattern $\nu$, and look for conditions ensuring its existence as a stationary solution for Eqs.~\eqref{lim-eqs}. Note that such ansatz has indeed a finite overlap with pattern $\nu$, since $\xi_\nu=|z|$ while $\xi_\mu=0$  $\forall \mu\neq\nu$, and that 
also $m_{ z,k}=-f_{\nu,{ k}}|z|$ would be valid, with $\xi_\nu=-|z|$.

\begin{figure}[t]
\centering
\includegraphics[scale=0.53]{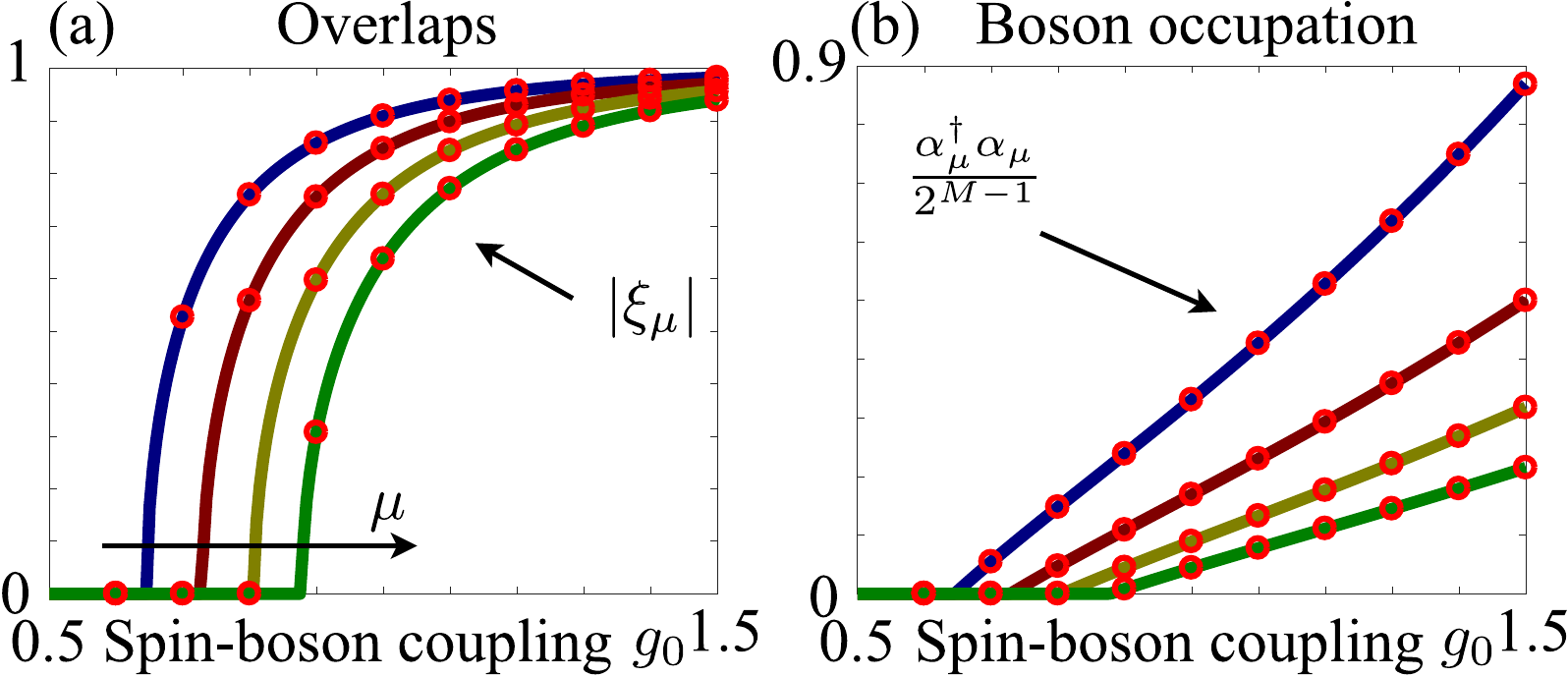}
\caption{{\bf Pattern-recognition phase transition.} Comparison between theoretical prediction (solid lines) and numerical simulations of the mean-field equations (circles). We consider $M=4$. a) Each curve corresponds to the stationary overlap $\left|\xi_\mu\right|$ computed from the initial condition $\xi_\mu=1$ as a function of $g_0$, for $\Omega=0.5$, $\Omega_\mu=\Omega(\mu+2)$. Rates are in units of $\kappa$. Different colors correspond to values of $\mu$ growing as indicated by the arrow. Both theoretical and numerical results display a nonequilibrium phase transition, as shown by the behavior of the overlap. b) Same parameters and same order for the curves as in a). The occupation of the $\mu$-th bosonic mode becomes macroscopically occupied when the corresponding pattern is stored in the stationary state.}
\label{Fig3}
\end{figure} 

By substituting the ansatz for $m_{ z,k}$ in Eqs.~\eqref{lim-eqs}, taking $m_{ y,k}=0$ and appropriately fixing the values of $m_{ x,k}$ (see Ref.~\cite{SM} for details) we find that the relation
\begin{equation}
|z|=\sqrt{1-\frac{1}{4g_0^4}\left(\frac{\Omega}{\Omega_\nu}\right)^2\left[\Omega_\nu^2+\left(\frac{\kappa_\nu}{2}\right)^2\right]^2}\, 
\label{z-stat}
\end{equation}
must be satisfied, in order for the assumed stationary solution to exist. This is not always the case; indeed, $|z|$ must be a positive real number, $|z|\in[0;1]$, and this only happens if the argument of the square root is positive. This observation yields a critical value,
$$
g_{\rm crit}=\sqrt{\frac{1}{2}\left(\frac{\Omega}{\Omega_\nu}\right)\left[\Omega_\nu^2+\left(\frac{\kappa_\nu}{2}\right)^2\right]}\, ,
$$
such that for $g_0\ge g_{\rm crit}$ the ansatz solution exists, with $|z|$ given by Eq.~\eqref{z-stat}. On the other hand, if $g_0<g_{\rm crit}$, we can only have $|z|=0$, and we are outside the pattern-recognition phase. The critical $g$ depends on the pattern through the parameters $\Omega_\nu,\kappa_\nu$, see also Fig.~\ref{Fig3}(a-b). Further, note that a finite stationary overlap corresponds to a macroscopic occupation of the associated bosonic mode. Our theorem indeed implies  $N^{-1}\langle a^\dagger_\mu a_\mu\rangle\to \left|\alpha_\mu\right|^2$, for  $N\to\infty$, and we have $\left|\alpha_\mu\right|\propto \left|\xi_\mu\right|$ \cite{SM}. This feature, shown in Fig.~\ref{Fig3}(b), establishes a connection between pattern-recognition and the superradiant phase transitions in open multimode Dicke models. \\

{\bf Discussion.---} We have derived two key results for multimode Dicke models. First, we have shown that the mean-field assumption, typically exploited to consider the large-scale behavior of these systems, actually provides an exact description in the thermodynamic limit. Second, we have used this new insight to reveal the presence of a nonequilibrium phase transition from a disordered phase to a pattern-recognition phase in open multimode Dicke models. The stability of stationary solutions, such as the one of Eq.~\eqref{z-stat}, for open Dicke models has been shown, for instance, in Refs.~\cite{doi:10.1002/qute.201800043,PhysRevLett.118.123602}. For the multimode settings investigated here, the agreement of our numerical results with analytical ones [c.f.~Fig.~\ref{Fig3}] suggests that the proposed stationary states, having finite overlap with the patterns, possess stable basins of attraction in the pattern-recognition phase. Interestingly, the critical spin-boson coupling strength depends on the specific pattern through the corresponding bosonic mode parameters. This may allow for intermediate regimes of pattern recognition, where only certain patterns can be stored and retrieved. 

\noindent Following Ref.~\cite{Benatti_2018}, we remark that the validity  of the semi-classical Eqs.~\eqref{seq-spin} provides a necessary ingredient to obtain mathematically rigorous results on quantum fluctuations. It would be interesting to exploit it to re-obtain bosonic descriptions \cite{PhysRevLett.90.044101,PhysRevE.67.066203} employed for the investigation of quantum fluctuations in closed Dicke models and to extend these to open systems, via quantum central limit theorems \cite{Goderis:1989aa,verbeure2010many,Benatti_2018}. Contrary to Holstein-Primakoff approximations, these procedures do not assume a conserved total spin operator and are thus more general \cite{doi:10.1002/qute.201800043}. \\

\begin{acknowledgments}
\noindent We acknowledge support from the ``Wissenschaftler-R\"uckkehrprogramm GSO/CZS" of the 
Carl-Zeiss-Stiftung and the German Scholars Organization e.V., as well as through the
Deutsche Forschungsgemeinsschaft (DFG, German Research Foundation) under Project No. 
435696605, and under  Germany's  Excellence  Strategy - EXC No. 2064/1 - Project No. 390727645. 
FC acknowledges support through a Teach@T\"ubingen Fellowship.
\end{acknowledgments}

\bibliography{Notes_BIBLIO}

\newpage

\renewcommand\thesection{S\arabic{section}}
\renewcommand\theequation{S\arabic{equation}}
\renewcommand\thefigure{S\arabic{figure}}
\setcounter{equation}{0}
\setcounter{figure}{0}

\onecolumngrid

\begin{center}
{\Large SUPPLEMENTAL MATERIAL}
\end{center}
\begin{center}
\vspace{0.8cm}
{\Large Exactness of Mean-Field Equations for Open Dicke Models with an Application to Pattern Retrieval Dynamics}
\end{center}
\begin{center}
Federico Carollo,$^{1}$ and Igor Lesanovsky$^{1,2}$
\end{center}
\begin{center}
$^1${\em Institut f\"ur Theoretische Physik, Universit\"at T\"ubingen,}\\
{\em Auf der Morgenstelle 14, 72076 T\"ubingen, Germany}\\
$^2${\em School of Physics and Astronomy and}\\
{\em Centre for the Mathematics and Theoretical Physics of Quantum Non-Equilibrium Systems,}\\
{\em  University of Nottingham, Nottingham, NG7 2RD, UK}\\

\end{center}

\section{Possible extensions of the mapping}
As briefly mentioned in the main text, the mapping that we have introduced is applicable to more general models than the one we have focussed on in this work. We discuss here two possible extensions. \\

\noindent The first step of the mapping, as presented in the main text, is the gauge transformation $\sigma_{ z}\to -\sigma_{ z}$. This transformation may also, in general, modify the spin Hamiltonian (or a possible dissipative contributions on the spins) or may lead to complications.  However, such a step is not necessary. Indeed, instead of applying the gauge transformation to the spins in order to make the first pattern uniform ($G_{1,k}=1$, $\forall k$), one can directly start acting on the first pattern reordering it by moving all spins with $G_{1,k}=+1$ to the left and all spins with $G_{1,k}=-1$ to the right. After this is done, one can then proceed to reorder analogously the other patterns. In this way, instead of $2^{M-1}$ subsets of spins, the mapping generates $2^M$ of them. An illustration of this version of the mapping is given in Fig.~\ref{Fig1SM}.

\begin{figure}[h]
\centering
\includegraphics[scale=0.74]{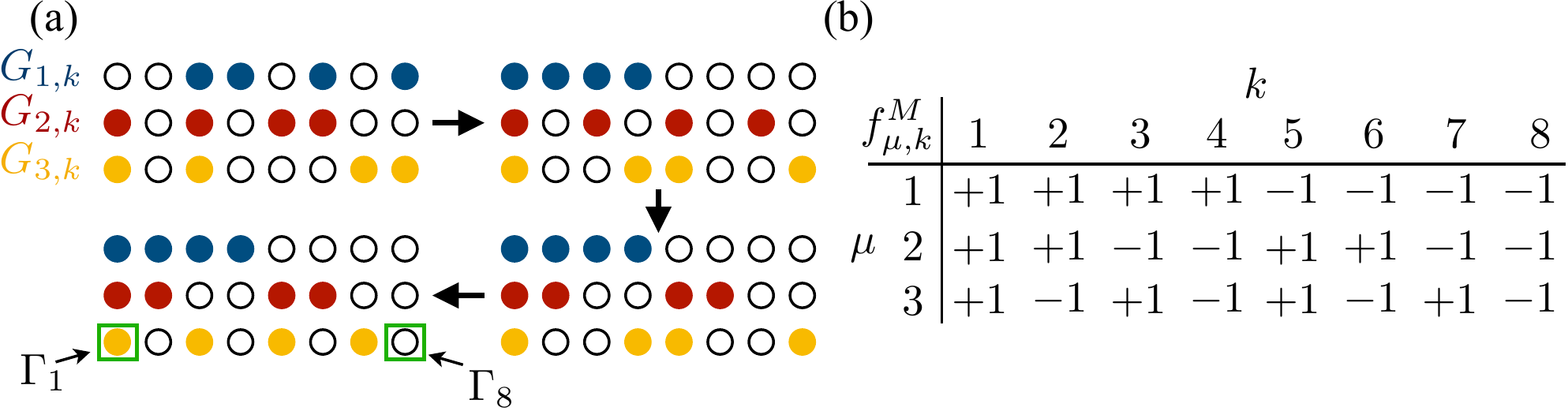}
\caption{{\bf Mapping to large spins without gauge transformation.} a) Example of the mapping without initial gauge transformation for $M=3$ patterns and $N_{\rm sp}=8$ spins. The coupling between the $\mu$-th mode and the $k$-th spin is encoded in the coefficients $G_{\mu,k}$. In this version of the mapping, we do not make the initial gauge transformation but rather start reordering the first pattern $G_{1,k}$. We relabel spins in such a way that all $G_{1,k}=1$ appear first, and are followed by all the $G_{1,k}=-1$. This merely amounts to a relabeling of the spins. We then proceed to reorder also the other patterns in an analogous way to what done in the main text. We notice that, in this way,  the mapping produces $2^{M}$ subsets of spins $\Gamma_{ k}$, which are equally coupled to each mode, ($\Gamma_1$ and $\Gamma_8$ are highlighted in the figure for clarity). b) These subsets of spins couple to bosons as specified by the matrix $f_{\mu, k}^M$. }
\label{Fig1SM}
\end{figure} 

\noindent Another possible extension considers different probability distributions for the coupling coefficients $G_{\mu,k}$. In general, each $G_{\mu,k}$ may assume the value $x_i$, where $i=1,2,\dots, d$, with probabilities $p_i$. In this case, one can reorder the first pattern by relabelling all those spins $k$, associated with a $G_{1,k}=x_1$ to move them to the first positions, followed by those with $G_{1,k}=x_2$, and so on till the spins having $G_{1,k}=x_d$ have been accomodated.  As for the mapping in the main text, this is just a simple relabelling of the spins. Then, moving to the second pattern $G_{2,k}$, we can proceed as follows. We focus, one by one, on the subsets of spins $k$ for which $G_{1,k}=x_i$. Within these subsets we further reorder the spins, relabelling them in such a way that the ones having $G_{2,h}=x_1$ are moved to the first positions in the subset, followed by those with $G_{2,h}=x_2$, till those having $G_{2,h}=x_d$. For the third, as well as for the remaining patterns the spins are analogously reordered. \\

\noindent We notice that in this case, the procedure generates $d^M$ subsets of spins which are, within each subset, all equally coupled to the different bosonic modes. As such, these ensembles can be treated as collective large spins. Interestingly, the extensivity of these subsets is, in general, not uniform and in fact depends on the probabilities $p_i$ of extracting the different couplings. In particular, we have that if the spins in the subset $\Gamma_{k}$ interact with the $\mu$ bosonic mode with the coupling $x_{i_\mu}$, then, for large numbers of total spins, $N_{\rm sp}\gg 1$, the number of spins forming the subset $\Gamma_{ k}$ is given by $N_{ k}\approx p_{i_1}p_{i_2}\dots p_{i_M}N_{\rm sp}$. When all subsets are expected to be different, the proof that we present is still applicable; it is sufficient to redo the same steps using $N_{\rm sp}$ instead of the $N$ introduced in the text, also in the definition of the macroscopic operators \eqref{seq-spin}. The extensivity of the different ensembles will then pose constraints on the modulus of the expectation values of the different macroscopic operators. For instance, in a given ensemble $\Gamma_{k}$ made of $N_{ k}\approx p_{i_1}p_{i_2}\dots p_{i_M}N_{\rm sp}$ spins, expectation values of  macroscopic spin operators cannot be larger than $p_{i_1}p_{i_2}\dots p_{i_M}$, in the thermodynamic limit. 

\begin{figure}[t]
\centering
\includegraphics[scale=0.74]{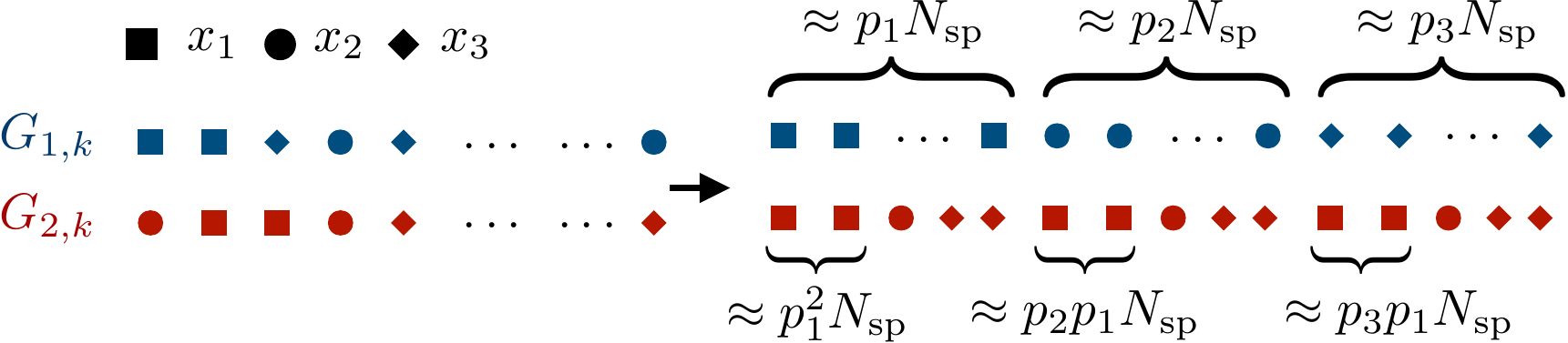}
\caption{{\bf Mapping to large spins for general distributions.} We present here an example of the mapping for couplings $G_{\mu,k}$ which can take three possible values $x_1,x_2,x_3$ and for $M=2$ bosonic modes. The idea is to relabel the spins in such a way that the two patterns $G_{1,k}$ and $G_{2,k}$ are reordered as in the figure. In particular, the first pattern will be reordered in such a way that the coupling $x_1$ is associated with the first subset of spins, $x_2$ with the second and $x_3$ with the third. Each of these subsets can be partitioned in smaller subsets. For instance, spins which couple through $x_1$ with the first mode, can be reordered according to their coupling with the second mode. The reordering of this last pattern reveals $3^2$ ensembles of spins which are all coupled with both bosonic modes in the same way. Since the probabilities of having the different values $x_i$ for the couplings are not uniform, the extensivity of the different ensembles is modulated by the probabilities $p_i$ as explained in the main text. }
\label{Fig2SM}
\end{figure}

\section{Proof of Main Theorem}
\begin{theorem}
Given the sequence of generators $\mathcal{L}_N$ introduced in the main text, the sequences of operators in Eq.~\eqref{seq-spin}, and the scalar time-dependent functions appearing in Eqs.~\eqref{lim-eqs}, the error $\mathcal{E}_N(t)$, defined as 
\begin{equation}
\begin{split}
\mathcal{E}_N(t):&=\sum_{\mu=1}^M\left[\omega_t\left(\left[\alpha_{\mu,N}-\alpha_\mu(t)\right]^\dagger \left[\alpha_{\mu,N}-\alpha_\mu(t)\right]\right)+\omega_t\left(\left[\alpha_{\mu,N}-\alpha_\mu(t)\right] \left[\alpha_{\mu,N}-\alpha_\mu(t)\right]^\dagger\right)\right]\\
&+\sum_{ b=x,y,z}\sum_{{ k}=1}^{2^{M-1}}\omega_t\left(\left[m_{ b,k}^N-m_{ b,k}(t)\right]^2\right)\, ,
\end{split}
\label{def-E}
\end{equation}
is such that
\begin{equation}
\lim_{N\to\infty}\mathcal{E}_N(t)=0\, ,
\label{theorem1-1}
\end{equation}
$\forall t\ge0$ finite, if the initial condition
\begin{equation}
\lim_{N\to\infty}\mathcal{E}_N(0)=0\, ,
\label{theorem1-2}
\end{equation}
is satisfied. 

Eq.~\eqref{theorem1-1} implies that the mean-field equations for the scalar quantities $\alpha_\mu$ and $m_{a,k}$, for all $\mu,{ k}$ and ${a=x,y,z}$, correctly describe the dynamics of the expectation values of the limiting operators of the sequences in Eqs.~\eqref{seq-spin}.
\label{theorem}
\end{theorem}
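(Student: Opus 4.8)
The plan is to control the nonnegative quantity $\mathcal{E}_N(t)$ by a differential inequality and close it with Gronwall's lemma. The first move is to note that each summand of $\mathcal{E}_N(t)$ is the expectation, in the evolving state $\omega_t$, of an operator that depends on time both implicitly, through $\omega_t$, and explicitly, through the scalar mean-field functions $m_{a,k}(t)$ and $\alpha_\mu(t)$. I would therefore introduce the total generator $\mathcal{D}:=\mathcal{L}_N+\partial_t$, where $\partial_t$ differentiates only the explicit (c-number) time dependence, so that $\frac{d}{dt}\omega_t(X_t)=\omega_t(\mathcal{D}[X_t])$ for every such operator. The target is a bound $\dot{\mathcal{E}}_N(t)\le C\,\mathcal{E}_N(t)+o(1)$ with $C$ independent of $N$, after which Gronwall yields $\mathcal{E}_N(t)\le e^{Ct}\mathcal{E}_N(0)+o(1)$ and the hypothesis $\mathcal{E}_N(0)\to0$ finishes the proof.

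The decisive structural step is to show that $\mathcal{D}$ maps the error operators into combinations of error operators only. Writing $E_{a,k}=m_{a,k}^N-m_{a,k}(t)$ and $A_\mu=\alpha_{\mu,N}-\alpha_\mu(t)$, one inserts the Heisenberg equations \eqref{heis-eq} and the mean-field equations \eqref{lim-eqs} and checks that the linear drift terms cancel by construction: $\mathcal{D}A_\mu=-(i\Omega_\mu+\tfrac{\kappa_\mu}{2})A_\mu-ig\sum_k f_{\mu,k}^M E_{z,k}$, a linear combination of errors, and likewise $\mathcal{L}_N[m_{a,k}^N]-\dot m_{a,k}(t)$ reduces, after substituting $\alpha_{\mu,N}=A_\mu+\alpha_\mu(t)$ and $m_{b,k}^N=E_{b,k}+m_{b,k}(t)$, to a sum whose every term contains at least one factor $E_{b,k}$ or $A_\mu$. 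Because the spin operators commute with the bosonic jump operators $a_\mu$, the dissipative part of $\mathcal{L}_N$ acts as a derivation on products of spin operators, so the near-Leibniz expansion $\mathcal{D}[E_{a,k}^2]=\{E_{a,k},\,\mathcal{L}_N[m_{a,k}^N]-\dot m_{a,k}(t)\}$ inherits this property. As a consequence, every contribution to $\dot{\mathcal{E}}_N(t)$ carries at least two error factors (in mean square), up to the commutator anomaly discussed below.

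I would then estimate each such contribution by $C\,\mathcal{E}_N(t)$ using three ingredients. First, Cauchy--Schwarz for the positive functional $\omega_t$, $|\omega_t(X^\dagger Y)|\le\sqrt{\omega_t(X^\dagger X)}\sqrt{\omega_t(Y^\dagger Y)}$, together with the arithmetic--geometric inequality, turns a product of two error factors directly into $\mathcal{E}_N(t)$; this is exactly the estimate $|\omega_t(E_{b,k}BA_\mu)|\le\|B\|\,\mathcal{E}_N(t)$ quoted in the main text. Second, the spin operators are uniformly bounded, $\|m_{a,k}^N\|\le1$ hence $\|E_{a,k}\|\le2$, so whenever a bounded operator $B$ sandwiches a positive error one may extract its norm, e.g. $\omega_t\big((A_\mu B)^\dagger A_\mu B\big)\le\|B\|^2\omega_t(A_\mu^\dagger A_\mu)$, and in particular $\omega_t\big((E_{b,k}E_{a,k})^\dagger E_{b,k}E_{a,k}\big)\le4\,\mathcal{E}_N(t)$; this is what converts the apparently cubic terms back into $O(\mathcal{E}_N)$. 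Third, the mean-field scalars are bounded on any finite interval -- $\sum_a m_{a,k}^2$ is conserved by the rotational mean-field spin flow, and $\alpha_\mu(t)$ stays bounded because its linear ODE is damped ($-\kappa_\mu/2<0$) with a source bounded by $|m_{z,k}|\le1$ -- so all coefficients $\|B\|$ are $N$-independent. The one genuinely new feature, and the main obstacle, is that the bosonic operators $\alpha_{\mu,N}$ are unbounded: no operator-norm estimate is available for them, so their control must come entirely from the symmetric second moments $\omega_t(A_\mu^\dagger A_\mu+A_\mu A_\mu^\dagger)$ in $\mathcal{E}_N(t)$ and from the positivity of $\omega_t$. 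This is precisely why $\mathcal{E}_N$ is built from symmetrized bosonic moments rather than from operator norms.

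Finally, I would track the commutator anomaly. Because $[\alpha_{\mu,N},\alpha_{\mu,N}^\dagger]=1/N$, the dissipative part of $\mathcal{L}_N$ fails to act as a derivation on the bosonic bilinears by a c-number (carr\'e-du-champ) amount; explicitly it contributes $+\kappa_\mu/N$ to $\frac{d}{dt}\omega_t(A_\mu A_\mu^\dagger)$. Summing over modes produces an additive term $\tfrac1N\sum_\mu\kappa_\mu$ in the differential inequality, giving $\dot{\mathcal{E}}_N(t)\le C\,\mathcal{E}_N(t)+\tfrac{D}{N}$ with $D=\sum_\mu\kappa_\mu$. The generalized Gronwall lemma then yields $\mathcal{E}_N(t)\le e^{Ct}\mathcal{E}_N(0)+\tfrac{D}{NC}\big(e^{Ct}-1\big)$, and letting $N\to\infty$ with $\mathcal{E}_N(0)\to0$ sends the right-hand side to zero for every finite $t$. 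I expect the bookkeeping of the many cross terms, and the careful handling of the unbounded bosonic factors so that each term is genuinely controlled by $\mathcal{E}_N(t)$ and not merely by $\sqrt{\mathcal{E}_N(t)}$, to be the most delicate part; the drift cancellation built into the mean-field equations is what guarantees that no uncontrolled single-error term survives.
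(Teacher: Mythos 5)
Your proposal is correct and follows essentially the same route as the paper: a Gronwall argument on $\mathcal{E}_N(t)$, with the drifts cancelled so that $\mathcal{L}_N[\,\cdot\,]-\dot{(\cdot)}$ is expressed entirely in error operators, and each resulting term bounded by $\|B\|\,\mathcal{E}_N(t)$ via Cauchy--Schwarz, the uniform bound $\|m_{a,k}^N\|\le 1$, and a priori bounds on the mean-field scalars from the conservation of $m_{T,k}^2$ and the damped $\alpha_\mu$ equation. The only (immaterial) deviation is your treatment of the anti-normally ordered bosonic term: you retain the carr\'e-du-champ anomaly $\kappa_\mu/N$ and invoke a generalized Gronwall inequality with an $O(1/N)$ source, whereas the paper observes that $A_\mu A_\mu^\dagger=A_\mu^\dagger A_\mu+1/N$ as an operator identity, so its time derivative equals that of the normally ordered term and the homogeneous Gronwall bound suffices; both yield the stated limit.
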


\begin{proof}
In order to prove the theorem, we start considering the time-derivative of $\mathcal{E}_N(t)$
\begin{equation}
\begin{split}
\frac{d}{dt}\mathcal{E}_N(t)=&\sum_{\mu=1}^M\left[\frac{d}{dt}\omega_t\left(\left[\alpha_{\mu,N}-\alpha_\mu(t)\right]^\dagger\left[\alpha_{\mu,N}-\alpha_\mu(t)\right]\right)+\frac{d}{dt}\omega_t\left(\left[\alpha_{\mu,N}-\alpha_\mu(t)\right]\left[\alpha_{\mu,N}-\alpha_\mu(t)\right]^\dagger\right)\right]+\\
+&\sum_{ b=x,y,z}\sum_{ k=1}^{2^{M-1}}\frac{d}{dt}\omega_t\left(\left[m_{ b,k}^N-m_{b,k}(t)\right]^2\right)\, .
\end{split}
\label{theorem-partial1}
\end{equation}
Using the results in Lemmata \ref{Lemma3} and \ref{Lemma4} to bound the modulus of all terms appearing in the sums, we find
\begin{equation}
\frac{d}{dt}\mathcal{E}_N(t)\le \left(2Md_0+2^{M-1}3c_0\right)\mathcal{E}_N(t)\, ,
\label{theorem-partial2}
\end{equation}
where $c_0,d_0$ are the time-independent and $N$-independent bounded positive quantities defined in Lemmata \ref{Lemma3} and \ref{Lemma4}. The above inequality implies, because of Gronwall's Lemma, 
$$
\mathcal{E}_N(t)\le e^{\left[\left(2Md_0 +2^{M-1}3c_0\right) t\right]}\mathcal{E}_N(0)\, .
$$
If the initial state of the system is such that condition in Eq.~\eqref{theorem1-2} is satisfied, we  have
$$
\lim_{N\to\infty}\mathcal{E}_N(t)\le e^{\left(2Md_0 +2^{M-1}3c_0\right) t}\lim_{N\to\infty}\mathcal{E}_N(0)=0\, .
$$

Now, because of the bounds 
\begin{equation}
\begin{split}
\left|\omega_t\left(m_{ b,k}^N-m_{ b,k}(t)\right)\right|  \le \sqrt{\mathcal{E}_N(t)}\, ,\qquad \qquad
\left|\omega_t\left(\alpha_{\mu,N}-\alpha_\mu(t)\right)\right|  \le \sqrt{\mathcal{E}_N(t)}\, ,
\end{split}
\label{theorem-partial3}
\end{equation}
the statement in Eq.~\eqref{theorem1-1} implies that 
\begin{equation}
\begin{split}
\lim_{N\to\infty}\omega_t\left(m_{ b,k}^N\right)=m_{ b,k}(t)\, ,\qquad \qquad 
\lim_{N\to\infty}\omega_t\left(\alpha_{\mu,N}\right)=\alpha_\mu(t)\, .
\end{split}
\label{theorem-partial4}
\end{equation}
Physically, these relations mean that the mean-field dynamical equations \eqref{lim-eqs} correctly capture the time-evolution of macroscopic spin and boson operators, in the limit $N\to\infty$. 
\end{proof}

\section*{LEMMATA}

\begin{lemma} Given the sequences of operators defined in Eq.~\eqref{seq-spin}, the scalar time-dependent functions of \eqref{lim-eqs}, and the quantity $\mathcal{E}_N(t)$, defined in Eq.~\eqref{error} and given explicitely in Eq.~\eqref{def-E}, the following bounds hold:
\begin{align}
\left|\omega_t\left(\left[m_{ b,k}^N-m_{ b,k}(t)\right]\left[m_{ a,h}^N-m_{ a,h}(t)\right]\right)\right|&\le \mathcal{E}_N(t)\, ,
\label{Lemma1-1}\\
\left|\omega_t\left(\left[\alpha_{\mu,N}-\alpha_\mu(t)\right]^\dagger m_{ b,k}^N\left[m_{ a,h}^N-m_{ a,h}(t)\right]\right)\right|&\le \mathcal{E}_N(t)\, ,
\label{Lemma1-2}\\
\left|\omega_t\left(\left[\alpha_{\mu,N}-\alpha_\mu(t)\right] m_{ b,k}^N\left[m_{ a,h}^N-m_{ a,h}(t)\right]\right)\right|&\le \mathcal{E}_N(t)\, ,
\label{Lemma1-3}\\
\left|\omega_t\left(\left[m_{ a,k}^N-m_{ a,k}(t)\right]\left[\alpha_{\mu,N}-\alpha_\mu(t)\right]\right)\right|&\le \mathcal{E}_N(t)\, ,
\label{Lemma1-4}
\end{align}
\label{Lemma1}
\end{lemma}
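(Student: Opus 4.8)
The plan is to derive all four inequalities from a single tool — the Cauchy–Schwarz inequality for the state $\omega_t$, namely $|\omega_t(X^\dagger Y)|^2 \le \omega_t(X^\dagger X)\,\omega_t(Y^\dagger Y)$ — combined with the elementary observation that $\mathcal{E}_N(t)$ is a sum of nonnegative terms, so that each individual summand bounds it from below. Concretely, the definition \eqref{def-E} immediately supplies $\omega_t(E_{a,k}^2)\le\mathcal{E}_N(t)$, $\omega_t(A_\mu^\dagger A_\mu)\le\mathcal{E}_N(t)$, and $\omega_t(A_\mu A_\mu^\dagger)\le\mathcal{E}_N(t)$ for every index, where I abbreviate $E_{a,k}=m_{a,k}^N-m_{a,k}(t)$ and $A_\mu=\alpha_{\mu,N}-\alpha_\mu(t)$. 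Two structural facts are used throughout: each $E_{a,k}$ is self-adjoint (since $m_{a,k}^N$ is a real multiple of a sum of Pauli operators and $m_{a,k}(t)$ is a real scalar), and each macroscopic spin operator is a contraction, $\|m_{b,k}^N\|=N^{-1}\|\sum_{h\in\Gamma_k}\sigma_b^{(h)}\|\le1$, so that $(m_{b,k}^N)^2\le I$ as operators.

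First I would dispatch the two-factor bounds \eqref{Lemma1-1} and \eqref{Lemma1-4}. For \eqref{Lemma1-1}, using that $E_{b,k}$ is self-adjoint I would write $\omega_t(E_{b,k}E_{a,h})=\omega_t(E_{b,k}^\dagger E_{a,h})$ and apply Cauchy–Schwarz to obtain $|\omega_t(E_{b,k}E_{a,h})|\le\sqrt{\omega_t(E_{b,k}^2)}\,\sqrt{\omega_t(E_{a,h}^2)}\le\mathcal{E}_N(t)$. Identically, \eqref{Lemma1-4} follows from $|\omega_t(E_{a,k}A_\mu)|\le\sqrt{\omega_t(E_{a,k}^2)}\,\sqrt{\omega_t(A_\mu^\dagger A_\mu)}\le\mathcal{E}_N(t)$, where the self-adjointness of $E_{a,k}$ again lets me treat it as the left factor of the inner product.

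The three-factor bounds \eqref{Lemma1-2} and \eqref{Lemma1-3} are the crux, and the choice of pairing in Cauchy–Schwarz is what must be gotten right. For \eqref{Lemma1-2} I would set $X=A_\mu$ and $Y=m_{b,k}^N E_{a,h}$, so that the first factor is $\omega_t(X^\dagger X)=\omega_t(A_\mu^\dagger A_\mu)\le\mathcal{E}_N(t)$ and the second is $\omega_t(Y^\dagger Y)=\omega_t(E_{a,h}(m_{b,k}^N)^2 E_{a,h})$; the contraction property $(m_{b,k}^N)^2\le I$ then yields the operator inequality $E_{a,h}(m_{b,k}^N)^2 E_{a,h}\le E_{a,h}^2$ (the standard fact that $B^\dagger T B\le B^\dagger B$ whenever $0\le T\le I$, applied with $B=E_{a,h}$), and positivity of $\omega_t$ bounds its expectation by $\omega_t(E_{a,h}^2)\le\mathcal{E}_N(t)$, whence \eqref{Lemma1-2}. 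For \eqref{Lemma1-3}, since it is $A_\mu$ rather than $A_\mu^\dagger$ that appears, I would instead take $X=A_\mu^\dagger$ and $Y=m_{b,k}^N E_{a,h}$, so that the first factor becomes $\omega_t(A_\mu A_\mu^\dagger)\le\mathcal{E}_N(t)$ while the second is handled exactly as before. This is precisely where the symmetrized bosonic term $\omega_t(A_\mu^\dagger A_\mu+A_\mu A_\mu^\dagger)$ in $\mathcal{E}_N(t)$ earns its place: without the $A_\mu A_\mu^\dagger$ contribution the bound \eqref{Lemma1-3} could not be closed. The only genuine subtlety — and the step I would flag as the main obstacle — is justifying the operator inequality $E_{a,h}(m_{b,k}^N)^2 E_{a,h}\le E_{a,h}^2$ and its passage under $\omega_t$; everything else reduces to bookkeeping once the Cauchy–Schwarz pairings are matched to the adjoint structure of each target expression.
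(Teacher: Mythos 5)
Your proof is correct and follows essentially the same route as the paper's: Cauchy--Schwarz for the state $\omega_t$, absorption of the middle operator $m_{b,k}^N$ via $\|m_{b,k}^N\|\le 1$ (the paper bounds $\omega_t(B^\dagger C^\dagger C B)\le\|C\|^2\omega_t(B^\dagger B)$, which is your $B^\dagger T B\le B^\dagger B$ for $0\le T\le I$), and bounding each resulting factor by a nonnegative summand of $\mathcal{E}_N(t)$, with the symmetrized bosonic term supplying $\omega_t(A_\mu A_\mu^\dagger)$ exactly where you say it is needed. The only cosmetic difference is that the paper converts the product of square roots into a sum via $xy\le x^2+y^2$ before completing to $\mathcal{E}_N(t)$, whereas you bound the product directly by $\sqrt{\mathcal{E}_N(t)}\,\sqrt{\mathcal{E}_N(t)}$, which incidentally covers the diagonal case $a=b$, $k=h$ without the separate remark the paper makes.
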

\begin{proof}
Before considering all different cases, we derive a bound which is valid for generic operators. We will then show, case by case via an appropriate choice of the operators, each of the above relations. \\

\noindent We start considering the expectation value $\omega_t\left(A^\dagger C B\right)$. The state $\omega_t$ is a positive linear and normalized functional. Thus, we can use Cauchy-Schwarz inequality to obtain
\begin{equation}
\left|\omega_t\left(A^\dagger C B\right)\right|\le \sqrt{\omega_t\left(A^\dagger A\right)}\sqrt{\omega_t\left(B^\dagger C^\dagger C B\right)}\, .
\end{equation}
In addition, the inequality $C^\dagger C\le \|C\|^2$ implies that
$$
\omega_t\left(B^\dagger C^\dagger C B\right)\le \|C\|^2\omega_t\left(B^\dagger B\right)\, .
$$
Altogether, we have
\begin{equation}
\left|\omega_t\left(A^\dagger C B\right)\right|\le \|C\|\sqrt{\omega_t\left(A^\dagger A\right)}\sqrt{\omega_t\left(B^\dagger  B\right)}\, .
\label{Lemma-partial1}
\end{equation}
To proceed we make the following observation. Consider two positive numbers $x,y$: the square of their difference is a positive number 
$$
(x-y)^2=x^2+y^2-2xy\ge0\, .
$$
By turning the above relation around, we get
$$
xy\le \frac{1}{2}\left(x^2+y^2\right)\le x^2 +y^2\, ,
$$
where the second inequality is obtained by removing the factor $1/2$. Identifying $x=\sqrt{\omega_t\left(A^\dagger A\right)}$ and $y=\sqrt{\omega_t\left(B^\dagger B\right)}$, this means that  
$$
\sqrt{\omega_t\left(A^\dagger A\right)}\sqrt{\omega_t\left(B^\dagger B\right)}\le \omega_t\left(A^\dagger A\right)+\omega_t\left(B^\dagger B\right)\, .
$$
Using the above finding in Eq.~\eqref{Lemma-partial1}, we have the relation
\begin{equation}
\left|\omega_t\left(A^\dagger C B\right)\right|\le \|C\|\left(\omega_t\left(A^\dagger A\right)+\omega_t\left(B^\dagger B\right)\right)\, ,
\label{Lemma-partial2}
\end{equation}
which we use to show all relations in the statement of the Lemma. \\

To prove the Eq.\eqref{Lemma1-1}, we consider the bound in Eq.~\eqref{Lemma-partial2} with $C={\bf 1}$, $A=m_{ b,k}^N-m_{ b,k}(t)$ and $B=m_{ a,h}^N-m_{ a,h}(t)$, and then add on the right hand side of the resulting Eq.~\eqref{Lemma-partial2} all the missing terms to reconstruct $\mathcal{E}_N(t)$. This can be done since each term forming $\mathcal{E}_N(t)$ is positive. Notice that if ${ b=a}$ and ${ k=h}$, then equation \eqref{Lemma1-1} is trivially satisfied. \\

To prove Eq.~\eqref{Lemma1-2}, we consider Eq.~\eqref{Lemma-partial2} with $C=m_{ b,k}^N$, noticing that $\|m_{ b,k}^N\|=1$. Then, we take $A^\dagger =\left(\alpha_{\mu,N} -\alpha_\mu(t)\right)^\dagger$ and $B=m_{ a,h}^N-m_{ a,h}(t)$, and add on the right hand side of the resulting Eq.~\eqref{Lemma-partial2} the remaining terms.\\

For Eq.~\eqref{Lemma1-3}, we have the same as above but with $A^\dagger = \alpha_{\mu,N}-\alpha_\mu(t)$. \\

Finally, for Eq.~\eqref{Lemma1-4} we take $A=m_{ a,k}^N-m_{ a,k}(t)$, $B=\alpha_{\mu,N}-\alpha_\mu(t)$ and $C={\bf 1}$ and proceed as above. 
\end{proof}

\begin{lemma} Given the sequences of operators $m_{ a,k}^N$ and $\alpha_{\mu,N}$ defined in Eq.~\eqref{seq-spin}, and the sequence of dynamical generators $\mathcal{L}_N$ introduced in the main text, we have that 
\begin{align*}
\mathcal{L}_N[m_{ a,k}^N]&=-2\Omega\sum_{ b=x,y,z}\epsilon_{ xab}m_{ b,k}^N-2g\sum_{\mu=1}^M\sum_{ b=x,y,z}f_{\mu,{ k}}^M\left(\alpha_{\mu,N}+\alpha_{\mu,N}^\dagger\right)\epsilon_{ zab}m_{ b,k}^N\\
\mathcal{L}_N[\alpha_{\mu,N}]&=-\left(i\Omega_\mu+\frac{\kappa_\mu}{2}\right)\alpha_{\mu,N}-ig\sum_{ k=1}^{2^M-1}f_{\mu,{ k}}^M m_{ z,k}^N\\
\end{align*}
\label{Lemma2}
\end{lemma}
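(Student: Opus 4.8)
The goal is a direct, computational derivation: I simply apply the generator $\mathcal{L}_N$ of Eq.~\eqref{MT-Lind} to each operator in Eq.~\eqref{seq-spin} and collect terms. The plan is to treat the Hamiltonian (commutator) part and the dissipative part separately, and within the Hamiltonian to split $H_N = H_{\rm F}+H_{\rm int}^N$ into the free and interaction contributions.

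First I would compute $\mathcal{L}_N[m_{a,k}^N]=\frac{1}{N}\mathcal{L}_N[S_{a,k}]$. Since the jump operators act only on the bosonic sector and $S_{a,k}$ is a purely spin operator, the dissipative term $\kappa_\mu(a_\mu^\dagger X a_\mu - \tfrac12\{n_\mu,X\})$ vanishes on $S_{a,k}$; hence only $i[H_N,S_{a,k}]$ survives. For the free part $H_{\rm F}=\Omega\sum_h \sigma_x^{(h)}+\sum_\mu\Omega_\mu n_\mu$, the bosonic piece commutes with spins, and $i[\Omega\sum_h\sigma_x^{(h)},S_{a,k}]$ is evaluated using the Pauli algebra $[\sigma_x,\sigma_a]=2i\sum_b\epsilon_{xab}\sigma_b$ on each spin in $\Gamma_k$, yielding $-2\Omega\sum_b\epsilon_{xab}S_{b,k}$. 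For the interaction part $H_{\rm int}^N=\frac{g}{\sqrt N}\sum_{\mu,k'}f_{\mu,k'}^M(a_\mu+a_\mu^\dagger)S_{z,k'}$, the bosonic factor commutes with $S_{a,k}$, and since spins in distinct subsets commute only the $k'=k$ term contributes, giving $i[\,S_{z,k},S_{a,k}\,]=-2\sum_b\epsilon_{zab}S_{b,k}$ (times the appropriate prefactor). Dividing by $N$ and recalling $\alpha_{\mu,N}=a_\mu/\sqrt N$ so that $(a_\mu+a_\mu^\dagger)/\sqrt N = \alpha_{\mu,N}+\alpha_{\mu,N}^\dagger$, the $1/\sqrt N$ from the coupling combines correctly with the rescalings to reproduce the stated expression.

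Next I would compute $\mathcal{L}_N[\alpha_{\mu,N}]=\frac{1}{\sqrt N}\mathcal{L}_N[a_\mu]$. Here the dissipative term does contribute: a short calculation gives $\kappa_\mu(a_\mu^\dagger a_\mu a_\mu - \tfrac12\{n_\mu,a_\mu\})=-\tfrac{\kappa_\mu}{2}a_\mu$, using $[a_\mu,a_\mu^\dagger]=1$. The commutator $i[H_N,a_\mu]$ picks up $-i\Omega_\mu a_\mu$ from $i\Omega_\mu[n_\mu,a_\mu]$ and, from $H_{\rm int}^N$, the term $i\frac{g}{\sqrt N}\sum_{k'}f_{\mu,k'}^M[a_\mu^\dagger,a_\mu]S_{z,k'}=-i\frac{g}{\sqrt N}\sum_{k'}f_{\mu,k'}^M S_{z,k'}$; the free spin Hamiltonian and the other modes commute with $a_\mu$. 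Assembling and dividing by $\sqrt N$, and recognizing $S_{z,k'}/N=m_{z,k'}^N$, reproduces $-(i\Omega_\mu+\kappa_\mu/2)\alpha_{\mu,N}-ig\sum_{k'}f_{\mu,k'}^M m_{z,k'}^N$.

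I do not anticipate a genuine obstacle here, as the result is essentially a bookkeeping exercise in the Pauli and canonical commutation relations. The only point demanding care is the factor tracking: the interplay of the $1/\sqrt N$ coupling scaling in $H_{\rm int}^N$ with the $1/\sqrt N$ in $\alpha_{\mu,N}$ and the $1/N$ in $m_{a,k}^N$ must be matched exactly so that the macroscopic operators close among themselves with $N$-independent coefficients. Verifying this consistency — that no residual factors of $N$ remain — is the step I would check most attentively, since it is precisely this closure that underlies the well-definedness of the thermodynamic limit and the subsequent Gronwall argument.
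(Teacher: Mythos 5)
Your proposal is correct and follows essentially the same route as the paper, which likewise proves the lemma by direct computation using the large-spin commutation relations $[S_{a,k},S_{b,h}]=2i\,\delta_{k,h}\sum_{c}\epsilon_{abc}S_{c,k}$ and the canonical bosonic relations; your explicit splitting into free/interaction/dissipative contributions and the check that the dissipator annihilates spin operators while giving $-\tfrac{\kappa_\mu}{2}a_\mu$ on the mode operators are exactly the bookkeeping the paper leaves implicit. The factor tracking you flag indeed closes as you describe (and, incidentally, the upper limit $2^{M}-1$ in the lemma's displayed sum is a typo for $2^{M-1}$, as comparison with the main-text equations confirms).
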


\begin{proof} To obtain the relations above, it is sufficient to compute the action of the generator on the considered operators, using their definition and exploiting the  commutation relations of the bosonic operators and the algebraic rules
$$
[S_{a,k},S_{b,h}]=i2\delta_{ k,h}\sum_{ c=x,y,z}\epsilon_{ abc}\, S_{ c,k}\, .
$$
The tensor $\epsilon_{ abc}$ is the fully antisymmetric tensor, while the Kronecker delta appears because operators of the subset $ k$ commute with operators of the subset $ h$. 
\end{proof}

\begin{lemma} Given the sequences of operators $m_{ a,k}^N$ defined in Eq.~\eqref{seq-spin}, the sequence of dynamical generators $\mathcal{L}_N$ introduced in the main text and the scalar time-dependent functions in \eqref{lim-eqs}, we have that 
\begin{align*}
\left|\frac{d}{dt}\omega_t\left(\left[m_{ a,k}^N-m_{ a,k}(t)\right]^2\right)\right|\le c_0\, \mathcal{E}_N(t)
\end{align*}
where
$$
c_0=12\left|\Omega\right|+24M\left|g\right|+12\left|g\right|M\bar{\beta}\, ,
$$
and 
$$
\bar{\beta}=2\max_{\forall \mu}\left(\left|\alpha_\mu(0)\right|+\sqrt{3}\frac{2^M\left|g\right|}{\kappa_\mu}\right)\, .
$$
\label{Lemma3}
\end{lemma}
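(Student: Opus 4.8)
The plan is to differentiate $\omega_t\big([m_{a,k}^N-m_{a,k}(t)]^2\big)$ directly and reorganize the result so that every surviving term carries at least one error factor, $E_{b,k}=m_{b,k}^N-m_{b,k}(t)$ or $A_\mu=\alpha_{\mu,N}-\alpha_\mu(t)$; such terms are then controlled by $\mathcal{E}_N(t)$ via Lemma~\ref{Lemma1}. Using $\frac{d}{dt}\omega_t(X(t))=\omega_t(\mathcal{L}_N[X(t)])+\omega_t(\partial_t X(t))$, I would split the derivative into a generator and an explicit part. The key simplification is that $E_{a,k}^2$ is a \emph{purely spin} operator: since the dissipator in Eq.~\eqref{MT-Lind} acts only on bosons and spin operators commute with $a_\mu,a_\mu^\dagger,n_\mu$, it annihilates both $E_{a,k}^2$ and $m_{a,k}^N$. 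Hence $\mathcal{L}_N[E_{a,k}^2]=i[H_N,E_{a,k}^2]$, and since $i[H_N,\cdot]$ is a derivation this equals $F_{a,k}^N E_{a,k}+E_{a,k}F_{a,k}^N$ with $F_{a,k}^N:=\mathcal{L}_N[m_{a,k}^N]$ given by Lemma~\ref{Lemma2}. The explicit part is $\partial_t E_{a,k}^2=-2\dot m_{a,k}(t)E_{a,k}$, where $\dot m_{a,k}(t)$ is the scalar right-hand side of Eqs.~\eqref{lim-eqs}; being a scalar it symmetrizes and absorbs into the operator term, giving
\[
\frac{d}{dt}\omega_t(E_{a,k}^2)=\omega_t\!\left(D_{a,k}E_{a,k}+E_{a,k}D_{a,k}\right),\qquad D_{a,k}:=F_{a,k}^N-\dot m_{a,k}(t).
\]

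The crux is to show that $D_{a,k}$ is built solely from error operators. Substituting $m_{b,k}^N=E_{b,k}+m_{b,k}(t)$ and $\alpha_{\mu,N}+\alpha_{\mu,N}^\dagger=(A_\mu+A_\mu^\dagger)+q_\mu(t)$, with $q_\mu(t):=\alpha_\mu(t)+\alpha_\mu^*(t)$, the ``classical'' pieces of $F_{a,k}^N$ and of $\dot m_{a,k}(t)$ cancel. The identity I would use is $(\alpha_{\mu,N}+\alpha_{\mu,N}^\dagger)m_{b,k}^N-q_\mu(t)m_{b,k}(t)=(A_\mu+A_\mu^\dagger)m_{b,k}^N+q_\mu(t)E_{b,k}$, written so as to keep the norm-one factor $m_{b,k}^N$ intact. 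This yields
\[
D_{a,k}=-2\Omega\sum_b\epsilon_{xab}E_{b,k}-2g\sum_{\mu,b}f_{\mu,k}^M\epsilon_{zab}\left[(A_\mu+A_\mu^\dagger)m_{b,k}^N+q_\mu(t)E_{b,k}\right],
\]
so that each term of $\omega_t(D_{a,k}E_{a,k})$ and $\omega_t(E_{a,k}D_{a,k})$ has a form covered by Lemma~\ref{Lemma1}. The $\Omega$- and $q_\mu E_{b,k}$-terms are bounded through Eq.~\eqref{Lemma1-1}, while the $(A_\mu+A_\mu^\dagger)m_{b,k}^N$-terms match Eqs.~\eqref{Lemma1-2}--\eqref{Lemma1-3}, commuting the unbounded rescaled boson operator past the spin factor when the ordering is reversed, so that it appears only inside $\omega_t(A_\mu^\dagger A_\mu)$ or $\omega_t(A_\mu A_\mu^\dagger)$. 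Every such term is $\le\mathcal{E}_N(t)$ in modulus.

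Two quantitative ingredients finish the estimate. First, the scalar $q_\mu(t)$ multiplying error operators must be bounded by an $N$- and $t$-independent constant. Integrating the mean-field boson equation in Eqs.~\eqref{lim-eqs} and using the dissipative decay $e^{-\kappa_\mu t/2}$ with $|m_{z,k}(t)|\le m_{T,k}(t)=m_{T,k}(0)\le\sqrt3$ (the classical length $m_{T,k}^2=\sum_a m_{a,k}^2$ is conserved by Eqs.~\eqref{lim-eqs} and each initial component is bounded by one) gives $|\alpha_\mu(t)|\le|\alpha_\mu(0)|+\sqrt3\,2^M|g|/\kappa_\mu$, hence $|q_\mu(t)|\le\bar\beta$. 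Second, counting contributions --- three values of $b$ (using $\sum_b|\epsilon_{xab}|,\sum_b|\epsilon_{zab}|\le3$), $M$ modes, the two terms of $A_\mu+A_\mu^\dagger$, and the two orderings $D_{a,k}E_{a,k}$ and $E_{a,k}D_{a,k}$ --- reproduces exactly $c_0=12|\Omega|+24M|g|+12M|g|\bar\beta$.

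The main obstacle is the middle step: arranging the cancellation of the classical parts of the generator and mean-field terms so that $D_{a,k}$ acquires a strictly error-operator structure, and doing so in the \emph{ordering} that leaves a norm-one spin operator adjacent to each rescaled (unbounded) boson operator. Only then do the boson contributions reduce to the bounded combinations already present in $\mathcal{E}_N(t)$, and only then is the Cauchy--Schwarz machinery of Lemma~\ref{Lemma1} applicable. The secondary but essential difficulty is the uniform-in-time bound on $\alpha_\mu(t)$: it is what makes $c_0$ independent of both $N$ and $t$, and hence suitable as the Gronwall constant of the main theorem.
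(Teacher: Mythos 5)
Your proposal is correct and follows essentially the same route as the paper's proof: the same reduction $\mathcal{L}_N[E_{a,k}^2]=\mathcal{L}_N[m_{a,k}^N]E_{a,k}+E_{a,k}\mathcal{L}_N[m_{a,k}^N]$ (since the dissipator acts trivially on spin operators), the same add-and-subtract identity isolating $(A_\mu+A_\mu^\dagger)m_{b,k}^N+q_\mu(t)E_{b,k}$, the same application of Lemma~\ref{Lemma1}, and the same Duhamel/Gronwall-style bound $|\alpha_\mu(t)|\le|\alpha_\mu(0)|+\sqrt{3}\,2^M|g|/\kappa_\mu$ yielding $\bar\beta$; the constant counting reproduces $c_0$ exactly.
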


\begin{proof}
First of all, we recall that the scalar functions $m_{ a,k},\alpha_\mu$ are solution to the mean-field equations \eqref{lim-eqs}, which are obtained via the factorization assumption of expectation values of the Heisenberg equations emerging from the results presented in Lemma \ref{Lemma2}. With this in mind, we start defining 
\begin{equation}
D_t:=\frac{d}{dt}\omega_t\left(\left[m_{ a,k}^N-m_{ a,k}(t)\right]^2\right)\, ,
\label{Lemma3-partial}
\end{equation}
and consider explicitly the time-derivative:
$$
D_t=\omega_t\left(\mathcal{L}_N\left[\left[m_{ a,k}^N-m_{ a,k}(t)\right]^2\right]\right)-2\dot{m}_{ a,k}(t)\, \omega_t\left(m_{a,k}^N-m_{ a,k}(t)\right)\, .
$$
The first term on the right-hand side of the above equation is obtained by taking the derivative on the state functional $\omega_t$ and using Eq.~\eqref{t-der-state}; the second term is, instead, emerging from the time-derivative applied to the term in the square brackets of Eq.~\eqref{Lemma3-partial}. 

Focussing on the action of the Lindblad map on the operator, we have that   
$$
\mathcal{L}_N\left[\left[m_{ a,k}^N-m_{a,k}(t)\right]^2\right]=i\left[H_N,\left[m_{ a,k}^N-m_{ a,k}(t)\right]^2\right]=\mathcal{L}_N\left[m_{ a,k}^N\right]\left[m_{ a,k}^N-m_{ a,k}(t)\right]+\left[m_{ a,k}^N-m_{ a,k}(t)\right]\mathcal{L}_N\left[m_{ a,k}^N\right]\, ,
$$
which follows from the fact that the generator annihilates the term proportional  to the identity ($\mathcal{L}_N[m_{ a,k}(t)]=0$) and that $\mathcal{L}_N$ acts directly on spins only via a Hamiltonian term. 

We thus write
\begin{equation}
\begin{split}
D_t&=\omega_t\left(\mathcal{L}_N\left[m_{ a,k}^N\right]\left[m_{ a,k}^N-m_{ a,k}(t)\right]\right)-\omega_t\left(\dot{m}_{ a,k}(t)\left[m_{ a,k}^N-m_{ a,k}(t)\right]\right)+\\
&+\omega_t\left(\left[m_{ a,k}^N-m_{ a,k}(t)\right]\mathcal{L}_N\left[m_{ a,k}^N\right]\right)-\omega_t\left(\left[m_{ a,k}^N-m_{ a,k}(t)\right]\dot{m}_{ a,k}(t)\right)\, .
\end{split}
\end{equation}
Notice that the terms $m_{ a,k}(t),\dot{m}_{ a,k}(t)$ can be safely pulled inside or outside of the state expectation, since they are scalar quantities. Collecting the first two terms of the above equation, as well as the second two terms, we obtain
$$
D_t=\omega_t\left(\left[\mathcal{L}_N\left[m_{ a,k}^N\right]-\dot{m}_{ a,k}(t)\right]\left[m_{ a,k}^N-m_{ a,k}(t)\right]\right)+\omega_t\left(\left[m_{ a,k}^N-m_{ a,k}(t)\right]\left[\mathcal{L}_N\left[m_{ a,k}^N\right]-\dot{m}_{a,k}(t)\right]\right)\, .
$$
Since the second term of the above equation is the complex conjugate of the first one, we just focus on the latter. We define it as 
$$
D_t^I=\omega_t\left(\left[\mathcal{L}_N\left[m_{ a,k}^N\right]-\dot{m}_{ a,k}(t)\right]\left[m_{ a,k}^N-m_{ a,k}(t)\right]\right)\, .
$$
Exploiting Lemma \ref{Lemma2} and the differential equations \eqref{lim-eqs}, we have that (we leave summation indeces implicit)
\begin{equation}
\mathcal{L}_N\left[m_{ a,k}^N\right]-\dot{m}_{ a,k}(t)=-2\Omega \sum_{ b}\epsilon_{ xab}\left[m_{ b,k}^N-m_{ b,k}(t)\right]-2g\sum_{\mu,{ b}}\epsilon_{ zab}f_{\mu,k}^M\left[\left(\alpha_{\mu,N}^\dagger +\alpha_{\mu,N} \right)m_{ b,k}^N-\left(\alpha_\mu^\dagger(t)+\alpha_\mu(t)\right)m_{ b,k}(t)\right]\, . 
\label{Lemma3-partial2}
\end{equation}

We need to reshape the last contributions to the above equation in a way that we can rewrite them in terms of the difference between macroscopic operators and their corresponding scalar values. To this end, we consider that 
$$
\left(\alpha_{\mu,N}^\dagger +\alpha_{\mu,N}\right)m_{ b,k}^N-\left(\alpha_\mu^\dagger(t) +\alpha_\mu(t)\right)m_{ b,k}(t)\!=\!\left(\!\alpha_{\mu,N}^\dagger\!+\!\alpha_{\mu,N}-\alpha_{\mu}^\dagger(t) -\alpha_\mu(t)\!\right)\!m_{ b,k}^N+\left(\alpha_\mu^\dagger(t) +\alpha_\mu(t)\right)\!\left(m_{ b,k}^N-m_{ b,k}(t)\right),
$$
where we have simply added and substracted the term $\left(\alpha_\mu^\dagger(t)+\alpha_\mu(t)\right)m_{ b,k}^N$.  From the above relation, we obtain 
\begin{equation}
\begin{split}
\left(\alpha_{\mu,N}^\dagger +\alpha_{\mu,N}\right)m_{ b,k}^N-\left(\alpha_\mu^\dagger(t) +\alpha_\mu(t)\right)m_{ b,k}(t)&=\left[\alpha_{\mu,N}^\dagger -\alpha_\mu^\dagger(t)\right]m_{ b,k}^N+\left[\alpha_{\mu,N}-\alpha_\mu(t)\right]m_{ b,k}^N+\\
&+\left(\alpha_\mu^\dagger(t) +\alpha_\mu(t)\right)\left[m_{ b,k}^N-m_{ b,k}(t)\right]\, .
\end{split}
\label{Lemma3-partial3}
\end{equation}
By substituting this in the square brackets of Eq.~\eqref{Lemma3-partial2}, we can write 
\begin{equation*}
\begin{split}
\mathcal{L}_N\left[m_{ a,k}^N\right]-\dot{m}_{ a,k}(t)=&-2\Omega \sum_{ b}\epsilon_{ xab}\left[m_{ b,k}^N-m_{ b,k}(t)\right]-2g\sum_{\mu,{ b}}\epsilon_{ zab}f_{\mu,{ k}}^M\left[\alpha_{\mu,N}^\dagger -\alpha_{\mu}^\dagger (t)\right]m_{ b,k}^N+\\
&-2g\sum_{\mu,{ b}}\epsilon_{ zab}f_{\mu,{ k}}^M \left[\alpha_{\mu,N}-\alpha_\mu(t)\right]m_{ b,k}^N-2g\sum_{\mu,{ b}}\epsilon_{ zab}f_{\mu,{ k}}^M\left(\alpha^\dagger_\mu(t)+\alpha_\mu(t)\right)\left[m_{ b,k}^N-m_{ b,k}(t)\right]\, , 
\end{split}
\end{equation*}
and use it to find the expression for $D_t^I$,
\begin{equation*}
\begin{split}
D_t^I&=-2\Omega \sum_{ b}\epsilon_{ xab} \omega_t\left(\left[m_{ b,k}^N-m_{ b,k}(t)\right]\left[m_{ a,k}^N-m_{ a,k}(t)\right]\right)\!-\!2g\sum_{\mu, { b}}\epsilon_{ zab}f_{\mu,{ k}}^M\omega_t\left(\left[\alpha_{\mu,N} -\alpha_\mu(t)\right]^\dagger m_{ b,k}^N\left[m_{ a,k}^N-m_{ a,k}(t)\right]\right)+\\
&-2g\!\sum_{\mu,{ b}}\!\epsilon_{ zab}f_{\mu,{ k}}^M\omega_t\left(\left[\alpha_{\mu,N}-\alpha_\mu(t)\right]m_{ b,k}^N\left[m_{ a,k}^N-m_{ a,k}(t)\right]\right)\!+\\
&-\!2g\!\sum_{\mu,{ b}}\!\epsilon_{ zab}f_{\mu,{ k}}^M\left(\alpha^\dagger_\mu(t)+\alpha_\mu(t)\right)\omega_t\left(\left[m_{ b,k}^N-m_{ b,k}(t)\right]\left[m_{ a,k}^N-m_{ a,k}(t)\right]\right).
\end{split}
\end{equation*}
The task is now to find proper bounds for each of these terms. We consider that
\begin{equation*}
\begin{split}
\left|D_t^I\right|&\le 2\left|\Omega\right| \sum_{ b}\left| \omega_t\left(\left[m_{ b,k}^N-m_{ b,k}(t)\right]\left[m_{ a,k}^N-m_{ a,k}(t)\right]\right)\right|+2\left|g\right|\sum_{\mu, { b}}\left|\omega_t\left(\left[\alpha_{\mu,N} -\alpha_\mu(t)\right]^\dagger m_{ b,k}^N\left[m_{ a,k}^N-m_{ a,k}(t)\right]\right)\right|+\\
&+2\left|g\right|\sum_{\mu,{ b}}\left|\omega_t\left(\left[\alpha_{\mu,N}-\alpha_\mu(t)\right]m_{ b,k}^N\left[m_{ a,k}^N-m_{ a,k}(t)\right]\right)\right|+\\
&+2\left|g\right|\sum_{\mu,{ b}}\left|\alpha^\dagger_\mu(t)+\alpha_\mu(t)\right|\left|\omega_t\left(\left[m_{ b,k}^N-m_{ b,k}(t)\right]\left[m_{ a,k}^N-m_{ a,k}(t)\right]\right)\right|\, ;
\end{split}
\end{equation*}
for the expectation value in the first and the last term we can use Eq.~\eqref{Lemma1-1} in Lemma \ref{Lemma1}, while for the expectation value in the second term we use Eq.~\eqref{Lemma1-2} and for the third Eq.~\eqref{Lemma1-3}. Considering also the extensions of the summations, we obtain the following bound 
\begin{equation}
\left|D_t^I\right|\le \left(6\left|\Omega\right|+6M\left|g\right|+6M\left|g\right|+6M\left|g\right|\beta_t\right)\mathcal{E}_N(t)\, .
\label{Lemma3-partial4}
\end{equation}
In the above relation we have introduced the quantity $\beta_t$ defined as  
$$
\beta_t:=\sup_{\forall\mu}\left|\alpha_\mu(t)+\alpha_\mu^\dagger(t)\right|\, ,
$$
for $t\ge0$,  needed to bound the scalar term in the fourth term of Eq.~\eqref{Lemma3-partial4}. To achieve a meaningful bound we need to show that $\beta_t$ is finite. To this end, we take advantage of the formal solution of the mean-field equations \eqref{lim-eqs} to get 
\begin{equation*}
\alpha_\mu(t)=e^{-\left(i\Omega_\mu+\kappa_\mu/2\right)t}\alpha_\mu(0)-ig\int_0^t ds \, e^{-\left(i\Omega_\mu+\kappa_\mu/2\right)(t-s)}\sum_{ k=1}^{2^{M-1}}f_{\mu, { k}}^M \, m_{ z,k}(s)\, .
\end{equation*}
We then take the modulus of the above relation 
\begin{equation*}
\left|\alpha_\mu(t)\right|\le e^{-t\, \kappa_\mu/2}\left|\alpha_\mu(0)\right|+\left|g\right|\int_0^t ds \, e^{-(t-s)\kappa_\mu/2}\sum_{ k=1}^{2^{M-1}}\left|m_{ z,k}(s)\right|\, .
\end{equation*}
We notice that 
$$
\left|m_{ z,k}(t)\right|=\sqrt{m_{ z,k}^2(t)}\le \sqrt{m_{ T,k}^2}\, ,
$$
where we have $m_{ T,k}^2=m_{ x,k}^2(t)+m_{ y,k}^2(t)+m_{ z,k}^2(t)$. For Eqs.~\eqref{lim-eqs}, $m_{ T,k}^2$ is a constant of motion. Given that the initial values for the system of differential equations \eqref{lim-eqs} are to be taken as 
$$
m_{ a,k}(0)=\lim_{N\to\infty}\omega\left(m_{ a,k}^N\right)\le 1\, ,
$$
the initial value for $m_{ T,k}$ is such that $|m_{ T,k}|\le\sqrt{3}$. This is a loose bound, since for physical reasons one would expect $|m_{ T,k}|\le1$. However, without assumptions on the initial state $\omega$, the bound $|m_{ T,k}|\le\sqrt{3}$ is more readily found. Thus, using that $|m_{ z,k}(t)|\le \sqrt{3}$, for all times $t\ge0$, we have 
\begin{equation*}
\left|\alpha_\mu(t)\right|\le \left|\alpha_\mu(0)\right|+\sqrt{3}\frac{2^M\left|g\right|}{\kappa_\mu}\, ,
\end{equation*}
where we also bounded the integral. This shows that 
$$
\beta_t\le 2\max_{\forall \mu}\left(\left|\alpha_\mu(0)\right|+\sqrt{3}\frac{2^M\left|g\right|}{\kappa_\mu}\right)=:\bar{\beta}\, ,
$$
which is finite if the initial $\alpha_\mu$'s have finite modulus. 

So far we have found that $\left|D_t^I\right|\le c_0/2\mathcal{E}_N(t)$
with $c_0$ as defined in the statement of the Lemma. We can conclude the proof by noticing that 
$$
\left|D_t\right|\le \left|D_t^I\right|+\left|\left(D_t^I\right)^*\right|\le c_0\, \mathcal{E}_N(t)\, .
$$
\end{proof}

\begin{lemma} Given the sequences of operators defined in Eq.~\eqref{seq-spin}, the sequence of dynamical generators $\mathcal{L}_N$ introduced in the main text, and the scalar time-dependent functions in \eqref{lim-eqs}, we have that 
\begin{align}
\left|\frac{d}{dt}\omega_t\left(\left[\alpha_{\mu,N}-\alpha_\mu(t)\right]^\dagger\left[\alpha_{\mu,N}-\alpha_\mu(t)\right]\right)\right|&\le d_0\mathcal{E}_N(t)\, ,
\label{Lemma4-1}\\
\left|\frac{d}{dt}\omega_t\left(\left[\alpha_{\mu,N}-\alpha_\mu(t)\right]\left[\alpha_{\mu,N}-\alpha_\mu(t)\right]^\dagger\right)\right|&\le d_0\mathcal{E}_N(t)\, ,
\label{Lemma4-2}
\end{align}
where 
$$
d_0=2\left[\Gamma+\left(2^M-1\right)\left|g\right|\right]\, ,
$$
and 
$$
\Gamma=\max_{\forall \mu}\left(\left|\Omega_\mu\right|+\frac{\kappa_\mu}{2}\right)
$$
\label{Lemma4}
\end{lemma}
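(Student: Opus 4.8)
The plan is to mirror the strategy of Lemma~\ref{Lemma3}, adapting it to the bosonic sector, where the essential new feature is that the loss dissipator is \emph{not} a derivation. Write $A_\mu=\alpha_{\mu,N}-\alpha_\mu(t)$ and $E_{z,k}=m^N_{z,k}-m_{z,k}(t)$, and consider first $\frac{d}{dt}\omega_t(A_\mu^\dagger A_\mu)$. As in Lemma~\ref{Lemma3}, the derivative splits into the action of the generator on the operator, $\omega_t(\mathcal{L}_N[A_\mu^\dagger A_\mu])$, plus the explicit derivative of the scalar $\alpha_\mu(t)$, namely $\omega_t(-\dot\alpha_\mu^*(t)A_\mu-A_\mu^\dagger\dot\alpha_\mu(t))$, using $\frac{d}{dt}\omega_t(X)=\omega_t(\mathcal{L}_N[X])$ for fixed $X$ together with $\mathcal{L}_N[\alpha_\mu(t)]=0$.

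The first key step is to compute $\mathcal{L}_N[A_\mu^\dagger A_\mu]$. The Hamiltonian part $i[H_N,\cdot]$ is a derivation, but the boson-loss dissipator is not; a short computation of its ``defect'' $\mathcal{D}_\mu[XY]-\mathcal{D}_\mu[X]Y-X\mathcal{D}_\mu[Y]=\kappa_\mu[a_\mu^\dagger,X][Y,a_\mu]$ shows that for $X=A_\mu^\dagger,\,Y=A_\mu$ it vanishes (since $[a_\mu^\dagger,A_\mu^\dagger]=0$), so $\mathcal{L}_N[A_\mu^\dagger A_\mu]=\mathcal{L}_N[A_\mu^\dagger]A_\mu+A_\mu^\dagger\mathcal{L}_N[A_\mu]$. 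I would then insert $\mathcal{L}_N[A_\mu]=\mathcal{L}_N[\alpha_{\mu,N}]$ from Lemma~\ref{Lemma2} and subtract the mean-field equation \eqref{lim-eqs} for $\dot\alpha_\mu(t)$ to obtain the clean identity $\mathcal{L}_N[A_\mu]-\dot\alpha_\mu(t)=-(i\Omega_\mu+\kappa_\mu/2)A_\mu-ig\sum_k f^M_{\mu,k}E_{z,k}$, with its adjoint giving $\mathcal{L}_N[A_\mu^\dagger]-\dot\alpha_\mu^*(t)$. Collecting terms, the derivative becomes $\omega_t((\mathcal{L}_N[A_\mu^\dagger]-\dot\alpha_\mu^*)A_\mu)+\omega_t(A_\mu^\dagger(\mathcal{L}_N[A_\mu]-\dot\alpha_\mu))$; the two diagonal contributions carry complex-conjugate prefactors $-(\pm i\Omega_\mu+\kappa_\mu/2)$, so the $\Omega_\mu$ pieces cancel and leave $-\kappa_\mu\,\omega_t(A_\mu^\dagger A_\mu)$, while the remaining $g$-terms have the form $\omega_t(A_\mu^\dagger E_{z,k})$ and $\omega_t(E_{z,k}A_\mu)$.

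The final step is bounding. Each $g$-term is controlled by Lemma~\ref{Lemma1} (the Cauchy--Schwarz bound \eqref{Lemma-partial2} with $C=\mathbf 1$) by $\mathcal{E}_N(t)$; since $|f^M_{\mu,k}|=1$ and the two orderings are each summed over $k$, they contribute at most $2(2^M-1)|g|\,\mathcal{E}_N(t)$, and the diagonal piece gives $\kappa_\mu\,\omega_t(A_\mu^\dagger A_\mu)\le 2\Gamma\,\mathcal{E}_N(t)$ using $\kappa_\mu\le 2\Gamma$. This yields \eqref{Lemma4-1}. For \eqref{Lemma4-2} the same steps apply to $A_\mu A_\mu^\dagger$, but now the defect does \emph{not} vanish: with $X=A_\mu,\,Y=A_\mu^\dagger$ one finds $\mathcal{L}_N[A_\mu A_\mu^\dagger]=\mathcal{L}_N[A_\mu]A_\mu^\dagger+A_\mu\mathcal{L}_N[A_\mu^\dagger]+\kappa_\mu/N$, the extra $\kappa_\mu/N$ coming from $[a_\mu^\dagger,A_\mu][A_\mu^\dagger,a_\mu]=1/N$.

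I expect this defect term to be the main obstacle, and its resolution is the conceptual crux of the lemma. A crude bound $\kappa_\mu/N\le \kappa_\mu\mathcal{E}_N(t)$ would already work (it is valid because $\mathcal{E}_N(t)\ge\omega_t(A_\mu A_\mu^\dagger)\ge 1/N$), but the clean route is to note that the canonical commutation relation gives $\omega_t(A_\mu A_\mu^\dagger)=\omega_t(A_\mu^\dagger A_\mu)+1/N$, so the defect recombines with the diagonal term, $-\kappa_\mu\omega_t(A_\mu A_\mu^\dagger)+\kappa_\mu/N=-\kappa_\mu\omega_t(A_\mu^\dagger A_\mu)$, reproducing exactly the structure of the $A_\mu^\dagger A_\mu$ case and hence the same bound $d_0\,\mathcal{E}_N(t)$. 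This is precisely why $\mathcal{E}_N(t)$ is built from the symmetric combination $\omega_t(A_\mu^\dagger A_\mu+A_\mu A_\mu^\dagger)$: the symmetrization lets the $1/N$ commutator defect of the non-derivation dissipator be absorbed without spoiling proportionality to $\mathcal{E}_N(t)$. Note that, unlike in Lemma~\ref{Lemma3}, no bound on $|\alpha_\mu(t)|$ (the quantity $\bar\beta$) is needed, since every prefactor appearing in $\mathcal{L}_N[A_\mu]-\dot\alpha_\mu$ is a fixed model constant.
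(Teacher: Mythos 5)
Your proposal is correct and follows essentially the same route as the paper's proof: split the derivative into $\omega_t(\mathcal{L}_N[\cdot])$ plus the scalar's explicit time derivative, use the non-derivation ``defect'' formula $\kappa_\mu[a_\mu^\dagger,X][Y,a_\mu]$ (which vanishes for the normal-ordered product), insert Lemma~\ref{Lemma2} minus Eqs.~\eqref{lim-eqs}, and close with the Cauchy--Schwarz bounds of Lemma~\ref{Lemma1}. Your two small variations --- exploiting the cancellation of the $\pm i\Omega_\mu$ prefactors to get $-\kappa_\mu\,\omega_t(A_\mu^\dagger A_\mu)$, and handling the anti-normal-ordered case by absorbing the $\kappa_\mu/N$ defect via the commutator shift rather than simply noting that $\omega_t(A_\mu A_\mu^\dagger)$ and $\omega_t(A_\mu^\dagger A_\mu)$ differ by the time-independent constant $1/N$ --- are both valid and lead to the same constant $d_0$.
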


\begin{proof} First, we focus on the proof of Eq.~\eqref{Lemma4-1}. We define and compute explicitly the time-derivative 
\begin{equation}
\begin{split}
\tilde{D}_t:&=\frac{d}{dt}\omega_t\left(\left[\alpha_{\mu,N}-\alpha_\mu(t)\right]^\dagger\left[\alpha_{\mu,N}-\alpha_\mu(t)\right]\right)\\
&=\omega_t\left(\mathcal{L}_N\left[\left[\alpha_{\mu,N}-\alpha_\mu(t)\right]^\dagger\left[\alpha_{\mu,N}-\alpha_\mu(t)\right]\right]\right)-\dot{\alpha}_\mu^\dagger (t)\omega_t\left(\alpha_{\mu,N}-\alpha_{\mu}(t)\right)-\dot{\alpha}_\mu(t)\omega_t\left(\alpha_{\mu,N}^\dagger-\alpha_\mu^\dagger(t)\right)\, , 
\end{split}
\label{tildeD}
\end{equation}
where we have used the relation in Eq.~\eqref{t-der-state} and the fact that $\alpha_\mu$ is a time-dependent scalar quantity. To proceed we need to consider the action of the Lindblad generator on the operators. We note that 
$$
\mathcal{L}_N\left[X^\dagger X\right]=\mathcal{L}_N\left[X^\dagger \right]X+X^\dagger \mathcal{L}_N\left[X\right]+\sum_{\nu}\kappa_\nu\left[a^\dagger_\nu,X^\dagger\right]\left[X,a_\nu\right]\, ,
$$
and, since in our case $X=\alpha_{\mu,N}-\alpha_\mu(t)$, the third term on the right hand side of the above relation is not contributing. As such, we can write
$$
\mathcal{L}_N\left[\left[\alpha_{\mu,N}-\alpha_\mu(t)\right]^\dagger\left[\alpha_{\mu,N}-\alpha_\mu(t)\right]\right]=\mathcal{L}_N\left[\alpha_{\mu,N}^\dagger\right]\left[\alpha_{\mu,N}-\alpha_\mu(t)\right]+\left[\alpha_{\mu,N}-\alpha_\mu(t)\right]^\dagger \mathcal{L}_N\left[\alpha_{\mu,N}\right]\, .
$$
Introducing this in the time-derivative of Eq.~\eqref{tildeD} we have 
\begin{equation}
\begin{split}
\tilde{D}_t&=\omega_t\left(\mathcal{L}_N\left[\alpha_{\mu,N}^\dagger\right]\left[\alpha_{\mu,N}-\alpha_\mu(t)\right]\right)-\dot{\alpha}_\mu^\dagger (t)\omega_t\left(\alpha_{\mu,N}-\alpha_\mu(t)\right)\\
&+\omega_t\left(\left[\alpha_{\mu,N}-\alpha_\mu(t)\right]^\dagger \mathcal{L}_N\left[\alpha_{\mu,N}\right]\right)-\dot{\alpha}_\mu(t)\omega_t\left(\left[\alpha_{\mu,N}-\alpha_\mu(t)\right]^\dagger\right)\\
&=\omega_t\left(\left[\mathcal{L}_N\left[\alpha_{\mu,N}^\dagger\right]-\dot{\alpha}_\mu^\dagger(t)\right]\left[\alpha_{\mu,N}-\alpha_\mu(t)\right]\right)+\omega_t\left(\left[\alpha_{\mu,N}-\alpha_\mu(t)\right]^\dagger\left[\mathcal{L}_N\left[\alpha_{\mu,N}\right]-\dot{\alpha}_\mu(t)\right]\right)\, ,
\end{split}
\label{Lemma4-partial2}
\end{equation}
where for the second equality we grouped the first and the second terms and the third and the fourth ones appearing after the first equality in Eq.~\eqref{Lemma4-partial2}. This can be done given that $\alpha_\mu(t)$ is a scalar and can be moved inside and outside of the expectation over states without problems. 

Considering that the second term in the second line of Eq.~\eqref{Lemma4-partial2} is the complex conjugate of the first, we define 
\begin{equation}
\tilde{D}_t^I:=\omega_t\left(\left[\mathcal{L}_N\left[\alpha_{\mu,N}^\dagger\right]-\dot{\alpha}_\mu^\dagger(t)\right]\left[\alpha_{\mu,N}-\alpha_\mu(t)\right]\right)\, ,
\label{Lemma4-partial3}
\end{equation}
so that 
$$
\tilde{D}_t=\tilde{D}_t^I+\left(\tilde{D}_t^I\right)^*\, ,
$$
and we can focus on $\tilde{D}_t^I$. To this end, we look at the term in the first round bracket in Eq.~\eqref{Lemma4-partial3}: we have 
$$
\mathcal{L}_N\left[\alpha_{\mu,N}^\dagger \right]-\dot{\alpha}_\mu^\dagger(t)=-\left(-i\Omega_\mu+\frac{\kappa_\mu}{2}\right)\left[\alpha_{\mu,N} -\alpha_\mu(t)\right]^\dagger +ig\sum_{ k}f_{\mu,{ k}}^M\left[m_{ z,k}^N-m_{ z,k}(t)\right]\, .
$$
Inserting the above equation back into Eq.~\eqref{Lemma4-partial3} we obtain 
$$
\tilde{D}_t^I=\left(i\Omega_\mu-\frac{\kappa_\mu}{2}\right)\omega_t\left(\left[\alpha_{\mu,N}-\alpha_\mu(t)\right]^\dagger \left[\alpha_{\mu,N}-\alpha_\mu(t)\right] \right)+ig\sum_{ k}f_{\mu,{ k}}^M \, \omega_t\left(\left[m_{ z,k}^N-m_{ z,k}(t)\right]\left[\alpha_{\mu,N}-\alpha_\mu(t)\right]\right)\, .
$$
We can now proceed to bound the term $\tilde{D}_t^I$ and we get
$$
\left|\tilde{D}_t^I\right|\le \left(\left|\Omega_\mu\right|+\frac{\kappa_\mu}{2}\right)\omega_t\left(\left[\alpha_{\mu,N}-\alpha_\mu(t)\right]^\dagger \left[\alpha_{\mu,N}-\alpha_\mu(t)\right]\right)+\left|g\right|\sum_{ k}\left|\omega_t\left(\left[m_{ z,k}^N-m_{ z,k}(t)\right]\left[\alpha_{\mu,N}-\alpha_\mu(t)\right]\right)\right|\, ;
$$
the first expectation on the right-hand side of the above equation is smaller than $\mathcal{E}_N(t)$. For the second term, we can use Eq.~\eqref{Lemma1-4}  in Lemma~\ref{Lemma1}. All together, considering also that the sum is over $2^{M-1}$ terms, this leads to 
\begin{equation}
\left|\tilde{D}_t^I\right|\le \left(\Gamma +|g|2^{M-1}\right)\mathcal{E}_N(t)
\label{Lemma4-partial4}
\end{equation}
where we have introduced the term $\Gamma=\max_{\forall \mu}\left(\left|\Omega_\mu\right|+\kappa_\mu/2\right)$. We can now use Eq.~\eqref{Lemma4-partial4} to achieve the bound 
$$
\left|\tilde{D}_t\right|\le d_0\, \mathcal{E}_N(t)\, ,
$$
with $d_0=2\left(\Gamma+|g|2^{M-1}\right)$, which proves the first relation [Eq.~\eqref{Lemma4-1}] of the Lemma. 

For the second relation we proceed as follows. Given the commulation relations of the $\alpha_{\mu,N}$, we have 
$$
\omega_t\left(\left[\alpha_{\mu,N}-\alpha_\mu(t)\right]\left[\alpha_{\mu,N}-\alpha_\mu(t)\right]^\dagger\right)=\omega_t\left(\left[\alpha_{\mu,N}-\alpha_\mu(t)\right]^\dagger\left[\alpha_{\mu,N}-\alpha_\mu(t)\right]\right)+\frac{1}{N}\, .
$$
We can thus relate the time-derivative in Eq.~\eqref{Lemma4-2} to the one in Eq.~\eqref{Lemma4-1} as follows 
$$
\frac{d}{dt}\omega_t\left(\left[\alpha_{\mu,N}-\alpha_\mu(t)\right]\left[\alpha_{\mu,N} -\alpha_\mu(t)\right]^\dagger\right)=\frac{d}{dt}\omega_t\left(\left[\alpha_{\mu,N} -\alpha_\mu(t)\right]^\dagger\left[\alpha_{\mu,N} -\alpha_\mu(t)\right]\right)\, ,
$$
and we can use this relation together with the previous result to find the bound in Eq.~\eqref{Lemma4-2}.
\end{proof}

\section*{Stationary solution with one finite overlap}
In this section, we provide details on the computation showing that a stationary solution to Eqs.~\eqref{lim-eqs} featuring a finite overlap with one of the patterns exists. To simplify the notation, we consider 
\begin{equation*}
m_{ x,k}=x_{ k}\, ,\qquad m_{ y,k}=y_{ k}\, ,\qquad m_{ z,k}=z_{ k}\, .
\end{equation*}
As reported in the main text, we want to show that a stationary solution to the mean-field equations, with $z_{ k}=f_{ \nu,k}^M|z|$, 
exists. In particular, this form implies a finite overlap with the pattern $\nu$. Indeed, we have 
$$
\xi_\mu=\frac{1}{2^{M-1}}\sum_{ k=1}^{2^{M-1}}f_{ \mu,k}^Mz_{ k}=\frac{|z|}{2^{M-1}}\sum_{ k=1}^{2^{M-1}}f_{ \mu,k}^Mf_{ \nu,k}^M=|z|\delta_{ \mu,\nu}\, ,
$$
since the quantity 
\begin{equation}
\sum_{ k=1}^{2^{M-1}}f_{ \mu,k}^Mf_{ \nu,k}^M=2^{M-1}\delta_{\mu,\nu}\, .
\label{f-f}
\end{equation}

\noindent For completeness, we explicitely write the equations of motion
\begin{equation}
\begin{split}
\dot{x}_{ k}&=-2g\sum_{\mu=1}^Mf_{ \mu,k}^M\left(\alpha_\mu^\dagger+\alpha_\mu\right)y_{ k}\, ,\\
\dot{y}_{ k}&=-2\Omega z_{ k}+2g\sum_{\mu=1}^Mf^M_{ \mu,k}\left(\alpha_\mu^\dagger+\alpha_\mu\right)x_{ k}\, ,\\
\dot{z}_{ k}&=2\Omega y_{ k}\, ,\\
\dot{\alpha}_{\mu}&=-\left(i\Omega_\mu+\frac{\kappa_{\mu}}{2}\right)\alpha_\mu-ig\sum_{ k=1}^{2^{M-1}}f_{ \mu,k}^M \, z_{ k}\, .
\end{split}
\label{eq-explicit}
\end{equation}
To look for stationary solutions, we need to set each of the above equations to zero. We assume $y_{ k}=0$, for all ${ k}$: this takes care of the first and the third equations. Then, we take $z_{ k}=f_{ \nu,k}^M|z|$ and consider that, for all ${ k}$, the initial total angular momentum is equal $m_{ T,k}=s$. Since this is a conserved quantity, $m_{ T,k}$ can be used to provide a relation between $x_{ k}$ and $z_{ k}$, at stationarity. In particular, we have 
\begin{equation}
x_{ k}=\pm\sqrt{s^2-|z|}\, .
\label{x-z-rel}
\end{equation}
We now look at the fourth equation. Setting this to zero, we obtain
$$
\alpha_\mu=2^{M-1}g|z|\delta_{\mu,\nu}\frac{-\Omega_\mu-i\frac{\kappa_\mu}{2}}{\Omega_\mu^2+\left(\frac{\kappa_\mu}{2}\right)^2}\, ,
$$
and thus 
$$
\left(\alpha_\mu+\alpha_\mu^\dagger\right)=-2^{M}g|z|\delta_{\mu,\nu}\frac{\Omega_\mu}{\Omega_\mu^2+\left(\frac{\kappa_\mu}{2}\right)^2}\, .
$$
We can now exploit this result for the second equation in \eqref{eq-explicit}. For $|z|\neq0$, we find 
$$
-2\Omega-g^22^{M+1}\frac{\Omega_\nu}{\Omega_\nu^2+\left(\frac{\kappa_\nu}{2}\right)^2}x_{ k}=0\, .
$$
Without loss of generality we assume all coefficients to be positive. In this case, $x_{ k}$ must be negative and using equation \eqref{x-z-rel} we have
$$
|z|^2=s^2-\frac{1}{4g_0^4}\left(\frac{\Omega}{\Omega_\nu}\right)^2\left[\Omega_\nu^2+\left(\frac{\kappa_\nu}{2}\right)^2\right]^2\, ,
$$
where we have further considered that $g=g_0/\sqrt{2^{M-1}}$. This relation can be satisfied only when the right hand side is positive. For concreteness, we take $s=1$. This means that the proposed solution is possible only if
$$
\frac{1}{4g_0^4}\left(\frac{\Omega}{\Omega_\nu}\right)^2\left[\Omega_\nu^2+\left(\frac{\kappa_\nu}{2}\right)^2\right]^2\le 1\, .
$$
The critical $g_0$, i.e.~the $g_0$ making the above relation an equality, is given by 
$$
g_0=\sqrt{\frac{1}{2}\left(\frac{\Omega}{\Omega_\nu}\right)\left[\Omega_\nu^2+\left(\frac{\kappa_\nu}{2}\right)^2\right]}\, .
$$

\end{document}